\documentclass[journal,onecolumn,12pt, draftclsnofoot]{IEEEtran}
\usepackage{amsfonts,color,morefloats}
\usepackage{booktabs}
\usepackage{amssymb,amsmath,latexsym,amsthm}
\usepackage{graphicx}
\usepackage{makecell}
\usepackage{hyperref}
\usepackage{multirow}
\usepackage{array}
\usepackage{algorithm}
\usepackage{algorithmic}
\usepackage{float}

\newtheorem{theorem}{Theorem}[section]
\newtheorem{lemma}[theorem]{Lemma}

\newtheorem{remark}{Remark}

\newtheorem{definition}{Definition}
\newtheorem{example}{Example}

\usepackage{enumerate}

\begin{document}

\title{A Criterion to Determine True Minimum Distances of Goppa Codes
	\thanks{The work of Shuying Dong and Chengju Li was supported by the National
	Natural Science Foundation of China (T2322007) and the State Key Laboratory of Cryptography and Digital Economy Security (KFYB2502). 
	The work of Hao Chen was supported by the National
	Natural Science Foundation of China (62032009).
	(Corresponding author: Chengju Li.)}
}

\author{Shuying Dong ~~ Hao Chen  ~~ Yaqi Chen ~~  Ziyan Xie ~~  Chengju Li\\
	\thanks{S. Dong is with the College of Information Engineering, Shanghai Maritime
		University, Shanghai, 201306, China (email: shying2026@163.com).}
	\thanks{H. Chen is with the College of Information
		Science and Technology, Jinan University, Guangzhou, Guangdong, 510632, China (email: haochen@jnu.edu.cn).}
	\thanks{Y. Chen is with the College of Cyber Security, Jinan University, Guangzhou, Guangdong, 510632, China (email: chenyq@stu.jnu.edu.cn).}
	\thanks{Z. Xie is with the School of Mathematics, Nanjing University of Aeronautics and Astronautics, Nanjing, 211106, China (email: xieziyan@nuaa.edu.cn).}
		\thanks{C. Li is with the MoE Engineering Research
		Center of Software/Hardware Co-Design Technology and Application, East China Normal University, Shanghai, 200062, China, and with State Key Laboratory of Cryptography and Digital Economy Security, Shandong University, Shandong, 266237, China (email: cjli@sei.ecnu.edu.cn).}
}

\date{\today}

\maketitle

\begin{abstract}
	Goppa codes play an important role in code-based cryptography due to their efficient decoding algorithms and their use as underlying private
	codes in the McEliece cryptosystem. To determine the true minimum  distances of Goppa codes is a notoriously difficult problem.
	In this paper, we establish a criterion for a
	Goppa code to attain its designed distance. We consider Goppa
	polynomials of the form
	$
	G(x)=U(x)H(x)+V(x)H'(x),
	$
	where $\deg(G)=t$ and $H(x)\in\mathbb{F}_q[x]$ is a monic irreducible
	polynomial of degree $t+1$ whose roots are contained in the support $L$.
	We prove that the corresponding Goppa code $\Gamma(L,G)$ contains a codeword of weight $t+1$
	 if and only if
	\[
	\frac{V(\alpha_{i_{t+1}})}{V(\alpha_{i_j})}\in\mathbb{F}_q^*,
	\qquad 1\leq j\leq t,
	\]
	where $\alpha_{i_1},\ldots,\alpha_{i_{t+1}}$ are the roots of $H(x)$.
	Based on this criterion, we derive a general family of Goppa codes that attain their designed distance by developing an interpolation-based construction of Goppa polynomials. We further obtain families of Goppa codes whose Goppa polynomials are determined by considering
	monomial, binomial, and their product of the auxiliary polynomial $V(x)$.
	By taking $H(x)$ to be different irreducible binomials and trinomials, we obtain several explicit families of Goppa codes whose minimum distances are equal to designed distance.
\end{abstract}

{\bf  Keywords} Goppa code, minimum distance.

\section{Introduction} \label{sec-intro}

Goppa codes were first introduced by V. D. Goppa in the early 1970s \cite{Goppa1970,Goppa1972}. 
As an important subclass of alternant codes, Goppa codes have been extensively studied due to their applications in algebraic coding theory and code-based cryptography. In particular, certain families of Goppa codes asymptotically attain the Gilbert-Varshamov bound \cite{HP2003} and several well-known algebraic codes, including BCH codes and Srivastava codes, are closely related to Goppa codes under suitable choices of parameters \cite{MS1977}.
  Moreover, the algebraic structure of Goppa codes enables efficient decoding algorithms, such as the Berlekamp-Massey algorithm \cite{Berlekamp1973}, the Patterson decoding algorithm \cite{Patterson1975}, and the modified extended Euclidean algorithm proposed by Sugiyama et al. \cite{Sugiyama1975}. 
	More importantly, the McEliece cryptosystem  \cite{McEliece1978} based on binary Goppa codes is considered a promising candidate for post-quantum cryptography due to its resistance against known quantum attacks.
  Consequently, extensive efforts have been devoted to investigating the structures, properties, and parameters of Goppa codes.

Let $\Bbb F_q$ be the finite field of order $q$, where $q$ is a prime power.
For a vector $\mathbf{c}=(c_0,\ldots,c_{n-1})\in\mathbb{F}_q^n$, let $\operatorname{supp}(\mathbf{c})=\{i:c_i\neq0\}$ and $\operatorname{wt}(\mathbf{c})=|\operatorname{supp}(\mathbf{c})|$.
The minimum distance of a linear code $\mathcal{C}\subseteq\mathbb{F}_q^n$ is
$$
d(\mathcal C)
=
\min_{\mathbf c\in\mathcal C\setminus\{\mathbf 0\}}
\operatorname{wt}(\mathbf c).
$$
A $q$-ary linear code with parameters $[n,k,d]$ is a $k$-dimensional subspace of $\mathbb{F}_q^n$ with minimum distance $d$.
Let $\mathbb F_{q^m}$ be an extension of
$\mathbb F_q$. 
\begin{definition}\label{lem-goppa}\cite{Goppa1970}
	Let $L= \{\alpha_1,\ldots,\alpha_{n}\} \subseteq \Bbb F_{q^m}$ be a set of $n$ distinct elements and $G(x) \in \Bbb F_{q^m}[x]$ be a polynomial of degree $t\le n$ such that $G(\alpha_i) \ne 0$ for $i=1,2,\ldots, n$. The $q$-ary Goppa code $\Gamma(L,G)$ with Goppa polynomial $G(x)$ and support $L$ is defined by
	$$
	\Gamma(L,G) = \biggl\{(c_1,c_2,\ldots,c_{n})\in \Bbb F_q^n \ \bigg| \sum_{i=1}^{n} \frac{c_i}{x-\alpha_i} \equiv 0 \pmod {G(x) } \biggr\}.
	$$
	If $G(x)$ is irreducible over $\Bbb F_{q^m}$, then $\Gamma(L,G)$ is called irreducible Goppa code.
\end{definition}

\begin{lemma}\label{lem-bound}\cite{Goppa1970}
	Let $\Gamma(L,G)$ be a $q$-ary Goppa code with $\deg(G) = t$. Then $\Gamma(L,G)$ is an $[n,k,d]$ linear code
	satisfying
	$
	k\geq n-mt$ and $d\geq t+1.$ The value $\delta=t+1$ is called its designed distance.
	If $q=2$ and $G(x)$ is square-free, then
	$
	d\geq 2t+1.
	$
\end{lemma}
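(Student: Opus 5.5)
The plan is to rewrite the congruence defining $\Gamma(L,G)$ as a polynomial identity. For a codeword $\mathbf c$ with support $I=\operatorname{supp}(\mathbf c)$, put $\sigma(x)=\prod_{i\in I}(x-\alpha_i)$. Then
\[
\sum_{i\in I}\frac{c_i}{x-\alpha_i}=\frac{\omega(x)}{\sigma(x)},\qquad \omega(x)=\sum_{i\in I}c_i\prod_{j\in I,\,j\neq i}(x-\alpha_j),
\]
where $\deg\omega\le |I|-1$. Since $G(\alpha_i)\neq0$ for every $i$, we have $\gcd(\sigma,G)=1$, so $\sigma$ is invertible modulo $G$. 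Hence the defining congruence $\sum_{i\in I} c_i/(x-\alpha_i)\equiv 0 \pmod{G}$ holds if and only if $G(x)\mid\omega(x)$.

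For the distance bound, take $\mathbf c\neq\mathbf 0$. The polynomial $\omega$ is nonzero, because $\omega(\alpha_i)=c_i\prod_{j\neq i}(\alpha_i-\alpha_j)\neq 0$ for each $i\in I$. If $G\mid\omega$, then $t=\deg G\le\deg\omega\le|I|-1$, so $\operatorname{wt}(\mathbf c)\ge t+1$. This gives $d\ge t+1$.

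For the dimension, expand each $1/(x-\alpha_i)$ modulo $G$. Equivalently, use $\frac{G(x)-G(\alpha_i)}{x-\alpha_i}\,G(\alpha_i)^{-1}$ as the inverse of $x-\alpha_i$ modulo $G$; this is a polynomial of degree $t-1$. Then $\mathbf c\in\Gamma(L,G)$ if and only if the $t$ coefficients of $\sum_i c_i G(\alpha_i)^{-1}\frac{G(x)-G(\alpha_i)}{x-\alpha_i}$ all vanish. This gives $t$ linear equations over $\mathbb F_{q^m}$. Writing each equation in an $\mathbb F_q$-basis of $\mathbb F_{q^m}$ turns it into $m$ equations over $\mathbb F_q$, so there are at most $mt$ independent $\mathbb F_q$-constraints, and $k\ge n-mt$.

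The binary square-free case is the only delicate step. With $q=2$, every $c_i=1$ on $I$, so $\omega(x)=\sum_{i\in I}\prod_{j\neq i}(x-\alpha_j)=\sigma'(x)$. Thus the condition becomes $G\mid\sigma'$. In characteristic $2$ the derivative $\sigma'$ is a polynomial in $x^2$, and since $\mathbb F_{2^m}$ is perfect, $\sigma'$ is a perfect square $\tau(x)^2$. Now $G$ is square-free, so $G\mid\tau^2$ implies $G\mid\tau$, and therefore $G^2\mid\sigma'$. This is the step I expect to require the most care: passing from "polynomial in $x^2$" to "square" uses perfectness, and passing from $G\mid\tau^2$ to $G\mid\tau$ uses square-freeness via the irreducible factors of $G$. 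Once this is done, $\sigma'\neq 0$ (it is $\omega$, already shown nonzero), so $2t\le\deg\sigma'\le|I|-1$, giving $d\ge 2t+1$.
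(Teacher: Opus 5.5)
The paper gives no proof of this lemma; it is quoted from Goppa's original work, so there is no in-paper argument to compare against. Your argument is the standard textbook proof, and it is correct. Each step holds:
\begin{itemize}
\item the reduction of the defining congruence to $G\mid\omega$, using that $\sigma$ is invertible modulo $G$;
\item the bound $t\le\deg\omega\le|I|-1$, which follows from $\omega(\alpha_i)\neq 0$;
\item the binary refinement $\omega=\sigma'=\tau^2$, $G\mid\tau^2\Rightarrow G\mid\tau\Rightarrow G^2\mid\sigma'$, with perfectness of $\mathbb F_{2^m}$ and square-freeness of $G$ used exactly where you flag them.
\end{itemize}

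One small slip in the dimension step: $(x-\alpha_i)\cdot\frac{G(x)-G(\alpha_i)}{x-\alpha_i}\,G(\alpha_i)^{-1}=G(x)G(\alpha_i)^{-1}-1\equiv -1 \pmod{G(x)}$. So the polynomial you use is $-(x-\alpha_i)^{-1}$ modulo $G$, not $(x-\alpha_i)^{-1}$. The sign is the same for every $i$, so the vanishing conditions, and hence $k\ge n-mt$, are unaffected.

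For orientation, the $t\times n$ coefficient matrix from your dimension step factors as $\mathbf C\,(\alpha_i^r)_{0\le r\le t-1,\,1\le i\le n}\,\operatorname{diag}\bigl(G(\alpha_1)^{-1},\ldots,G(\alpha_n)^{-1}\bigr)$. Here $\mathbf C$ is lower triangular when rows are ordered by decreasing powers of $x$, with the leading coefficient of $G$ on its diagonal, so it is invertible. Your matrix is therefore row-equivalent to $(\alpha_i^r/G(\alpha_i))$, which the paper uses without comment as ``the parity-check matrix'' in the proof of Theorem~\ref{thm-1}. From that form, $d\ge t+1$ also follows because every $t\times t$ minor is a Vandermonde matrix times a nonsingular diagonal matrix; this is the same computation the paper performs there. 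Your $\omega$-based route has the advantage of giving the binary square-free bound $d\ge 2t+1$ within the same framework.
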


Although the Goppa bounds provide lower bounds on the dimensions and minimum distances of Goppa codes, determining their exact values remains a difficult problem.
Early studies mainly focused on improving general bounds on the dimension and minimum distance of Goppa codes. In 1976, Sugiyama et al. derived improved bounds on the parameters of Goppa codes \cite{Sugiyama1976}. Feng subsequently improved lower bounds on the minimum distance and upper bounds on the number of parity-check symbols \cite{Feng1983}, while Yue and Ji studied minimum distance bounds for Goppa and alternant codes \cite{YueJi1991}, and Yue obtained further results on the dimension of Goppa codes \cite{Yue1991}.
 Later work increasingly focused on structured families of Goppa codes whose parameters could be bounded more tightly or determined exactly.
 Bezzateev and Shekhunova constructed a subclass of binary Goppa codes 
 defined by the Goppa polynomial
 $
 G(x)=x^t+A,
 $
 where $t\mid 2^m-1$ and $A$ is a $t$-th power in 
 $\mathbb{F}_{2^m}^{*}$. They showed that these codes attain their 
 designed distance \cite{Bezzateev1995}. Subsequently,  they determined
 the minimum distances of certain chains of separable binary Goppa codes
 \cite{Bezzateev2008}, and derived improved parameter estimates for
 special separable Goppa codes \cite{Bezzateev2010}. In 1988, V\'eron studied
 trace Goppa codes and obtained improved bounds on the dimension and
 minimum distance by exploiting the algebraic properties of the trace
 operator \cite{Veron1998}. He subsequently determined the true dimensions of certain binary quadratic trace Goppa codes and proved several related dimension conjectures \cite{Veron2001,Veron2005}. Couvreur et al. established structural identities for wild Goppa codes \cite{Couvreur2016}. More quasi-cyclic and trace Goppa codes were studied to obtain improved parameter bounds and new optimal or best-known codes in \cite{Bezzateev2017}, \cite{Byrne2023}, and the references therein. More recently, Wu et al. determined the exact minimum distance of a family of binary Goppa codes defined by $x^{3t}+1$ \cite{Wu2024}. However, a general and direct criterion for determining when a Goppa code attains its designed distance is still lacking.

In this paper, we mainly establish a criterion for a Goppa code 
$\Gamma(L,G)$ to attain its designed distance, where the 
Goppa polynomial is of the form
$
G(x)=U(x)H(x)+V(x)H'(x),
$
with $\deg(G)=t$ and $H(x)\in\mathbb{F}_q[x]$ being a monic irreducible
polynomial of degree $t+1$ whose roots are contained in the support
$L$.
By analyzing the parity-check matrix of $\Gamma(L,G)$, we transform the existence of a codeword of weight $t+1$ into a condition on the values of $V(x)$ at the conjugate roots of $H(x)$. More precisely, we prove that $\Gamma(L,G)$ contains a codeword $\mathbf{c}\in\mathbb F_q^n$ with $\operatorname{wt}(\mathbf{c}) = t+1$ if and only if
$$
\frac{V(\alpha_{i_{t+1}})}{V(\alpha_{i_j})}\in\mathbb F_q^*,\qquad 1\le j\le t,
$$
where $\alpha_{i_1},\ldots,\alpha_{i_{t+1}}$ are the roots of $H(x)$.
Consequently, whenever this condition is satisfied, $d(\Gamma(L,G))=t+1$ and the designed distance bound is achieved.
Furthermore, we derive several families of Goppa codes based on the criterion above, which are summarized as
follows.

\begin{itemize}
	\item By prescribing the values of $V(x)$ on the roots of $H(x)$,  we derive a general family of Goppa codes that attain their designed distance by developing an interpolation-based construction of Goppa polynomials.
	\item  By considering monomial, binomial, and their product forms of the auxiliary
	polynomial $V(x)$, we obtain families of Goppa
	codes that attain their designed distance, with the corresponding Goppa
	polynomials computed through
	$
	G(x)\equiv V(x)H'(x)\pmod{H(x)}.
	$
\end{itemize}
 In particular, by taking specific irreducible polynomials $H(x)$, including
binomials $x^{t+1}-a$ and trinomials $x^p-x-a$, we obtain several explicit families of Goppa
codes with exactly determined minimum distances.

This paper is organized as follows. Section~\ref{sec-pre} introduces the notation used throughout the paper. Section~\ref{sec-criterion}
establishes the minimum distance criterion and derives interpolation-based families of Goppa polynomials. Section~\ref{sec-V} develops constructions of Goppa codes with exactly determined minimum distances from the auxiliary polynomial $V(x)$. Section~\ref{sec-sum} concludes the paper. 

In this paper, by the Database, we mean the tables of best-known linear codes
\cite{G}, which are maintained by Markus Grassl at http://www.codetables.de/.

\section{Preliminaries} \label{sec-pre}

Throughout this paper, let $\mathbb F_q$ denote the finite field of order
$q$, where $q$ is a power of a prime $p$, and let
$\mathbb F_q^*=\mathbb F_q\setminus\{0\}$ denote its multiplicative
group. Let $\mathbb F_{q^m}$ be an extension field of $\mathbb F_q$.
In this section,  we introduce
the notation used throughout the paper.

Let
$
H(x)\in\mathbb F_q[x]
$
be a monic irreducible polynomial of degree $t+1$. A polynomial over a field is
called separable if it has no repeated roots in its splitting field. Equivalently,
a polynomial $f(x)$ is separable if and only if
$
\gcd(f(x),f'(x))=1,
$
where $f'(x)$ denotes the formal derivative of $f(x)$.
A field is called perfect if every irreducible polynomial over it is
separable. Every finite field is perfect because the Frobenius map
$$
\varphi:\mathbb F_q\longrightarrow\mathbb F_q,
\qquad
a\longmapsto a^p,
$$
is an automorphism. Therefore, the irreducible polynomial $H(x)$ is
separable and 
$
\gcd(H(x),H'(x))=1.
$
Then we have
$
H'(\theta)\neq0
$
for every root $\theta$ of $H(x)$.
 Since $H(x)\in \Bbb F_q[x]$ is irreducible with $\deg(H) = t+1$, we have
$$
\mathbb F_q(\theta)\cong\mathbb F_{q^{t+1}}.
$$
Moreover, the roots of $H(x)$ are the pairwise distinct Frobenius conjugates of $\theta$, namely
$\theta,\theta^q,\theta^{q^2},\ldots,\theta^{q^t}$.
In the remainder of this paper, we consider Goppa polynomials of the form
\begin{equation}\label{eq-Gx}
	G(x)=U(x)H(x)+V(x)H'(x),
\end{equation}
where
$
U(x),V(x)\in\mathbb F_{q^m}[x]
$
and
$
\deg(G)=t.
$

To obtain explicit families of Goppa polynomials, we consider specific
choices of the irreducible polynomial $H(x)$ appearing in \eqref{eq-Gx}. The following two lemmas
provide conditions under which the binomial $x^s-a$ and the trinomial
$x^p-x-a$ are irreducible over $\mathbb{F}_q$, and will be used
throughout this paper.

\begin{lemma}\label{lem-binomial-irreducible}
	Let $s\geq 2$ be an integer and $a\in \mathbb{F}_q^*$.
	Let $e$ be the order of $a$ in $\mathbb{F}_q^*$. Then the binomial
	$x^s-a$ is irreducible in $\mathbb{F}_q[x]$ if and only if the following
	two conditions are satisfied:
	\begin{enumerate}
		\item each prime factor of $s$ divides $e$, but does not divide
		$(q-1)/e$;
		\item $q\equiv 1 \pmod 4$ if $s\equiv 0 \pmod 4$.
	\end{enumerate}
\end{lemma}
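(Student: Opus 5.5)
We prove the classical irreducibility criterion for binomials (Capelli's theorem specialized to finite fields, as in Lidl--Niederreiter). The argument rests on two facts. First, $x^s-a$ is irreducible over $\mathbb F_q$ exactly when a root $\beta$ generates an extension of degree $s$. Second, $\beta$ lies in the cyclic group $\mathbb F_{q^k}^*$ for various $k$, so its degree is governed by the multiplicative order of $\beta$.

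\textbf{Step 1: the order of a root.} Let $\beta$ be a root of $x^s-a$ in some extension, and let $r=\operatorname{ord}(\beta)$. Since $\beta^s=a$ has order $e$, we get $r\mid se$. Moreover, the order of $\beta^s$ is $r/\gcd(r,s)=e$.

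\textbf{Step 2: the degree of $\beta$.} The degree of $\beta$ over $\mathbb F_q$ is the multiplicative order of $q$ modulo $r$, written $\operatorname{ord}_r(q)$. Irreducibility is therefore equivalent to $\operatorname{ord}_r(q)=s$ for some (equivalently, any) root $\beta$.

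\textbf{Step 3: reduction to prime powers via necessity.} Suppose $x^s-a$ is irreducible and $\ell$ is a prime dividing $s$. Irreducibility of $x^s-a$ forces irreducibility of $x^\ell-a$, since $\beta^{s/\ell}$ is a root of $x^\ell-a$ and any factorization of $x^\ell-a$ would pull back to one of $x^s-a$. Now $x^\ell-a$ is irreducible iff $a$ is not an $\ell$-th power in $\mathbb F_q^*$, provided $\ell\mid q-1$. If $\ell\nmid q-1$, then every element of $\mathbb F_q^*$ is an $\ell$-th power, so $x^\ell-a$ has a root in $\mathbb F_q$, which is impossible for $\ell\ge2$. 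Hence $\ell\mid q-1$. In the cyclic group $\mathbb F_q^*$, the element $a$ is a non-$\ell$-th power iff $\ell\mid e$ and $\ell\nmid (q-1)/e$. This gives condition 1. For condition 2, suppose $4\mid s$ and $q\equiv 3\pmod 4$. Then $x^4-a$ must be irreducible. Condition 1 forces $a=-b^2$ with $b\in\mathbb F_q$, and the identity
$$x^4+4c^4=(x^2+2cx+2c^2)(x^2-2cx+2c^2)$$
then yields a factorization. Concretely: since $-1$ is a nonsquare and $a$ is a nonsquare, $-a$ is a square, say $-a=4c^4$ after suitable choices of $c$ (using that squares are fourth powers when $q\equiv3\pmod 4$). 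This factorization contradicts irreducibility.

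\textbf{Step 4: sufficiency.} Assume conditions 1 and 2, and write $s=\prod \ell^{k_\ell}$. The goal is to show $\operatorname{ord}_r(q)=s$. By condition 1, for every prime $\ell\mid s$ we have $v_\ell(e)=v_\ell(q-1)\ge1$. From Step 1, $r=es$. The computation is then done one prime at a time using the lifting-the-exponent lemma: for odd $\ell$ with $\ell\mid q-1$,
$$v_\ell(q^k-1)=v_\ell(q-1)+v_\ell(k).$$
This gives $\ell^{v_\ell(e)+k_\ell}\mid q^k-1$ iff $\ell^{k_\ell}\mid k$. For $\ell=2$, condition 2 ensures the same formula holds: if $4\mid s$ then $4\mid q-1$, and LTE works with $v_2(q^k-1)=v_2(q-1)+v_2(k)$. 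If only $2\|s$, then $k$ even suffices, which needs a small separate check. Since all prime factors of $r$ divide $s$, combining over primes gives $\operatorname{ord}_r(q)=s$.

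\textbf{Main obstacle.} The main obstacle is the prime $2$, in both directions: the $x^4+4c^4$ factorization in the necessity argument, and the $2$-adic LTE case in the sufficiency argument. We would handle these carefully and treat the odd primes uniformly.
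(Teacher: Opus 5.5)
The paper gives no proof of this lemma. It is quoted as the classical binomial irreducibility criterion (Lidl--Niederreiter, Theorem 3.75), so there is no in-paper argument to compare against. Your proposal follows the standard route and is correct in substance.

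The necessity direction is fine as written:
\begin{enumerate}
\item A nontrivial factorization of $x^\ell-a$ (or $x^4-a$) pulls back under $x\mapsto x^{s/\ell}$ (or $x\mapsto x^{s/4}$) to one of $x^s-a$.
\item The $\ell$-th power test in the cyclic group $\mathbb F_q^*$ then gives condition 1.
\item For $q\equiv 3\pmod 4$, the element $-a/4$ is a square, hence a fourth power $c^4$, so $x^4-a=(x^2+2cx+2c^2)(x^2-2cx+2c^2)$.
\end{enumerate}

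In the sufficiency step, three things need fixing.

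\textbf{First}, $r=es$ is not a consequence of Step 1 alone; it can fail when condition 1 fails (e.g.\ $a=\beta=1$). You must invoke condition 1. From $r=e\gcd(r,s)$ and $v_\ell(e)\ge 1$ for every prime $\ell\mid s$, you get $v_\ell(r)>\min(v_\ell(r),v_\ell(s))$. This forces $v_\ell(r)=v_\ell(e)+v_\ell(s)$, while $v_\ell(r)=v_\ell(e)$ for $\ell\nmid s$.

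\textbf{Second}, your closing claim that all prime factors of $r$ divide $s$ is false. For $q=7$, $s=2$, $a=3$ one has $e=6$ and $r=12$. The correct reason the computation over the primes of $s$ suffices is this: for $\ell\nmid s$, the $\ell$-part of $r$ equals that of $e$, which divides $q-1$ and hence every $q^k-1$. Such primes impose no constraint. So $r\mid q^k-1$ if and only if $s\mid k$, i.e.\ $\operatorname{ord}_r(q)=s$.

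\textbf{Third}, the case $v_2(s)=1$ that you left open is a one-line check.
\begin{enumerate}
\item For odd $k$, $(q^k-1)/(q-1)$ is a sum of $k$ odd terms, so $v_2(q^k-1)=v_2(q-1)=v_2(e)$.
\item For even $k$, $q^2-1\mid q^k-1$ and $q+1$ is even, so $v_2(q^k-1)\ge v_2(e)+1$.
\end{enumerate}

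With these repairs your proof is complete.
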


\begin{lemma}\label{lem-trinomial-irreducible}
	Let $a\in \mathbb{F}_q$ and let $p$ be the characteristic of $\Bbb F_q$. Then the trinomial
	$
	x^p-x-a
	$
	is irreducible over $\mathbb{F}_q$ if and only if
	$
	\operatorname{Tr}_{\mathbb{F}_q/\mathbb{F}_p}(a)\neq 0.
	$
\end{lemma}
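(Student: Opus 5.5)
The plan is the standard Artin--Schreier argument: describe all roots of $f(x)=x^p-x-a$ in terms of one root, then compute the $q$-Frobenius orbit of that root explicitly. Write $q=p^r$. Fix a root $\theta$ of $f$ in a splitting field.

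First I will show that the roots of $f$ are exactly $\theta+c$ for $c\in\mathbb F_p$. Since $c^p=c$ for $c\in\mathbb F_p$, we get
\[
(\theta+c)^p-(\theta+c)-a=\theta^p-\theta-a=0 .
\]
These are $p$ distinct roots, which is all of them because $\deg f=p$.

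The key step is to compute $\theta^{q}$. From $\theta^p=\theta+a$, raising to the $p$-th power repeatedly gives
\[
\theta^{p^{j}}=\theta+a+a^p+\cdots+a^{p^{j-1}}
\]
by induction on $j$. Taking $j=r$ yields
\[
\theta^q=\theta+\operatorname{Tr}_{\mathbb F_q/\mathbb F_p}(a).
\]
Write $T=\operatorname{Tr}_{\mathbb F_q/\mathbb F_p}(a)\in\mathbb F_p$. Because $T\in\mathbb F_p\subseteq\mathbb F_q$, it is fixed by $x\mapsto x^q$, so iterating gives $\theta^{q^k}=\theta+kT$ for all $k\ge 0$. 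This bookkeeping with the trace is the only place needing care. In particular I must use that $T$ lies in $\mathbb F_p$ so that it is Frobenius-invariant.

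Now I split into the two cases.

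If $T=0$, then $\theta^q=\theta$. Hence $\theta\in\mathbb F_q$ is a root of $f$ in $\mathbb F_q$, and since $p\ge 2$, $f$ has a linear factor and is reducible.

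If $T\neq0$, the conjugates $\theta+kT$ for $k=0,1,\ldots,p-1$ are pairwise distinct, since $kT$ runs over all of $\mathbb F_p$. Moreover $\theta^{q^p}=\theta$. So the Frobenius orbit of $\theta$ over $\mathbb F_q$ has exactly $p$ elements, and the minimal polynomial of $\theta$ over $\mathbb F_q$ has degree $p$. That minimal polynomial divides $f$, which is monic of degree $p$, so the two are equal and $f$ is irreducible.

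Combining the two cases proves the equivalence stated in Lemma~\ref{lem-trinomial-irreducible}.
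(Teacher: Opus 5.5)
Your proof is correct and complete. The identity $\theta^{q}=\theta+T$, with $T=\operatorname{Tr}_{\mathbb F_q/\mathbb F_p}(a)\in\mathbb F_p$, shows that the $q$-Frobenius acts on the roots in one of two ways. Either it acts trivially, so $\theta\in\mathbb F_q$ and $f$ has a linear factor, or it acts as a $p$-cycle, so $\theta$ has degree $p$ over $\mathbb F_q$. This gives both directions of the equivalence.

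The paper does not prove this lemma; it quotes it as a standard fact, the Artin--Schreier criterion (see, e.g., Lidl and Niederreiter's \emph{Finite Fields}). The usual textbook proof is organized differently, in two steps. First one shows that $x^p-x-a$ is irreducible if and only if it has no root in $\mathbb F_q$. A monic factor of degree $d$ with $1\le d<p$ would be $\prod_{c\in S}(x-\theta-c)$ for some $S\subseteq\mathbb F_p$ with $|S|=d$. Its $x^{d-1}$-coefficient $-(d\theta+\sum_{c\in S}c)$ would then force $\theta\in\mathbb F_q$, since $d$ is invertible modulo $p$. Second, one invokes the additive form of Hilbert's Theorem~90: $a=\beta^p-\beta$ for some $\beta\in\mathbb F_q$ if and only if $\operatorname{Tr}_{\mathbb F_q/\mathbb F_p}(a)=0$.

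Your Frobenius computation does both steps at once and in effect reproves the needed case of Hilbert~90, so it is shorter and self-contained. The textbook route has the merit that its first step holds for $x^p-x-a$ over any field of characteristic $p$, which isolates exactly where finiteness is used. A small bonus of your version is that it names the conjugates explicitly as $\alpha^{q^{i-1}}=\alpha+(i-1)T$, which is the root ordering the paper adopts in its trinomial examples.
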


\section{A Minimum Distance Criterion for Goppa Codes}\label{sec-criterion}
Throughout this section, we assume that $t+1\mid m$ and $H(x)\in \Bbb F_q[x]$ is a monic irreducible polynomial with $\deg(H) = t+1$.
 Let $L=\{\alpha_1,\ldots,\alpha_n\} \subseteq\mathbb F_{q^m}$ be the support of the Goppa code, and suppose that all roots of $H(x)$ are contained in $L$. Thus there exist pairwise distinct indices $i_1,\ldots,i_{t+1}\in\{1,\ldots,n\}$ such that $\alpha_{i_1},\ldots,\alpha_{i_{t+1}}$ are exactly the roots of $H(x)$. We further assume that the Goppa polynomial $G(x)$ is given by \eqref{eq-Gx}. Since $H(\alpha_{i_j})=0$ for every $1\le j\le t+1$, we have $G(\alpha_{i_j})=V(\alpha_{i_j})H'(\alpha_{i_j})$. Moreover, it follows that $V(\alpha_{i_j})\neq0$ for $1\leq j\leq t+1$  since $G(\alpha_{i_j})\neq0$ and
 $H'(\alpha_{i_j})\neq0$.
In the following theorem, we establish a criterion for $\Gamma(L,G)$ to attain its designed distance. More precisely, we give a necessary and sufficient condition for $\Gamma(L,G)$ to contain a codeword of weight $t+1$.

\begin{theorem}\label{thm-1}
	Let $\Gamma(L,G)$ be a Goppa code over $\mathbb F_q$ with $\deg (G)=t$, where $G(x)$ is given by \eqref{eq-Gx}. Assume that $\alpha_{i_1},\ldots,\alpha_{i_{t+1}}$ are exactly the roots of $H(x)$ in $L$. Then $\Gamma(L,G)$ contains a codeword $\mathbf{c}\in\mathbb F_q^n$ with $\operatorname{supp}(\mathbf{c})=\{i_1,\ldots,i_{t+1}\}$ if and only if
	\begin{equation}\label{eq-v-condition}
		\frac{V(\alpha_{i_{t+1}})}{V(\alpha_{i_j})}\in\mathbb F_q^*,\qquad 1\le j\le t.
	\end{equation}
	Consequently, if \eqref{eq-v-condition} holds, then $d(\Gamma(L,G))=t+1$.
\end{theorem}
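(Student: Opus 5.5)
The plan is to write the Goppa condition for a vector $\mathbf c$ supported on $\{i_1,\ldots,i_{t+1}\}$ as a polynomial identity. Set $\beta_j=\alpha_{i_j}$ for brevity. Since $H(x)=\prod_{j=1}^{t+1}(x-\beta_j)$, we have
\[
\sum_{j=1}^{t+1}\frac{c_{i_j}}{x-\beta_j}=\frac{P(x)}{H(x)},\qquad P(x)=\sum_{j=1}^{t+1}c_{i_j}\prod_{k\ne j}(x-\beta_k).
\]
Here $\deg P\le t$ and $P(\beta_j)=c_{i_j}H'(\beta_j)$. Because $\gcd(H,G)=1$ ($G(\beta_j)=V(\beta_j)H'(\beta_j)\neq0$), $H$ is invertible modulo $G$. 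Hence $\mathbf c\in\Gamma(L,G)$ if and only if $G(x)\mid P(x)$. As $\deg G=t\ge\deg P$, this holds if and only if $P=\lambda G$ for some $\lambda\in\mathbb F_{q^m}$. Since $P$ is determined by its values at the $t+1$ points $\beta_j$, the condition $P=\lambda G$ is equivalent to $P(\beta_j)=\lambda G(\beta_j)$ for all $j$, that is,
\[
c_{i_j}H'(\beta_j)=\lambda V(\beta_j)H'(\beta_j)\quad\Longleftrightarrow\quad c_{i_j}=\lambda V(\beta_j),\qquad 1\le j\le t+1.
\]

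For the direction ($\Rightarrow$), suppose such a codeword exists with full support on these indices. Then $\lambda\neq0$, and all $c_{i_j}\in\mathbb F_q^*$. Therefore $V(\beta_{t+1})/V(\beta_j)=c_{i_{t+1}}/c_{i_j}\in\mathbb F_q^*$.

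For the direction ($\Leftarrow$), assume \eqref{eq-v-condition} and choose $\lambda=V(\beta_{t+1})^{-1}$. Then $c_{i_j}=V(\beta_j)/V(\beta_{t+1})$ is the inverse of an element of $\mathbb F_q^*$, so it lies in $\mathbb F_q^*$, and $c_{i_{t+1}}=1$. Setting all other coordinates to $0$ gives $\mathbf c\in\mathbb F_q^n$ with the required support, and $P=\lambda G$ holds by interpolation, so $\mathbf c\in\Gamma(L,G)$.

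Finally, $\mathbf c$ has weight $t+1$, and Lemma~\ref{lem-bound} gives $d\ge t+1$; hence $d(\Gamma(L,G))=t+1$.

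The main care point is the equivalence ``$G\mid P$ with $\deg P\le t=\deg G$'' $\iff$ ``$P=\lambda G$'' (including the case $P=0$, which corresponds to $\mathbf c=0$). The invertibility of $H$ modulo $G$ also needs care, but it follows from $G(\beta_j)\ne0$. Notably, the hypothesis $t+1\mid m$ is used only to guarantee that the roots lie in $\mathbb F_{q^m}$.
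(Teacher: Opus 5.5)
Your proof is correct. It reaches the same key relation as the paper, namely that the $c_{i_j}$ are proportional to $V(\alpha_{i_j})$, but by a different mechanism.

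\textbf{The paper's route.} It starts from the parity-check equations $\sum_{\ell}c_{i_\ell}\alpha_{i_\ell}^r/G(\alpha_{i_\ell})=0$ for $0\le r\le t-1$. It then singles out $c_{i_{t+1}}$ and inverts the $t\times t$ Vandermonde matrix via Lagrange interpolation. Finally it needs the identity \eqref{eq-lagrange-derivative}, with its sign bookkeeping, to turn the Lagrange products into $-H'(\alpha_{i_{t+1}})/H'(\alpha_{i_j})$.

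\textbf{Your route.} You stay with the defining congruence. Clearing denominators by $H$, which is invertible modulo $G$, turns membership into $G\mid P$ with $\deg P\le t=\deg G$, hence $P=\lambda G$. Evaluating at the roots of $H$ then gives $c_{i_j}=\lambda V(\beta_j)$ for all $t+1$ indices at once.

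\textbf{What your route buys.}
\begin{enumerate}
\item It needs neither the parity-check matrix, which the paper takes as known, nor any matrix inversion.
\item It treats all $t+1$ coordinates symmetrically.
\item It exhibits the full solution set over $\mathbb F_{q^m}$ of vectors supported on these positions: the line spanned by $(V(\beta_1),\ldots,V(\beta_{t+1}))$.
\item It makes sufficiency an explicit reversal of equivalences, rather than the paper's ``one can easily verify.''
\end{enumerate}

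\textbf{What the paper's route buys.} It is a direct linear-algebra computation on exactly the matrix used to compute the codes' parameters. Its intermediate formula \eqref{eq-cij-G} is visibly valid for any degree-$t$ Goppa polynomial before the form \eqref{eq-Gx} is used. Your relation $P(\beta_j)=\lambda G(\beta_j)$ gives the same generality, and neither argument actually uses the irreducibility of $H$, only that its roots are $t+1$ distinct points of $L$.
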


\begin{proof}
	By the Goppa bound, we have $d(\Gamma(L,G))\ge t+1$. Hence, to prove that $d(\Gamma(L,G))=t+1$, it suffices to construct a codeword of weight $t+1$.
	
	Let $\mathbf{c}=(c_1,\ldots,c_n)\in\mathbb F_q^n$ be a vector whose nonzero positions are $\{i_1,\ldots,i_{t+1}\}$. It follows from the parity-check matrix of Goppa code that $\mathbf{c}\in\Gamma(L,G)$ if and only if
	\begin{equation}\label{eq-linear-system}
		\sum_{\ell=1}^{t+1}c_{i_\ell}\frac{\alpha_{i_\ell}^r}{G(\alpha_{i_\ell})}=0,\qquad 0\le r\le t-1.
	\end{equation}
	We then separate the last coordinate $c_{i_{t+1}}$. Let
	$$
	\mathbf{V}=
	\begin{pmatrix}
		1 & 1 & \cdots & 1\\
		\alpha_{i_1} & \alpha_{i_2} & \cdots & \alpha_{i_t}\\
		\vdots & \vdots & \ddots & \vdots\\
		\alpha_{i_1}^{t-1} & \alpha_{i_2}^{t-1} & \cdots & \alpha_{i_t}^{t-1}
	\end{pmatrix}
	$$
	and
	$$
\mathbf{D}=\operatorname{diag}\bigl(G(\alpha_{i_1})^{-1},G(\alpha_{i_2})^{-1},\ldots,G(\alpha_{i_t})^{-1}\bigr).
	$$
	Since $\alpha_{i_1},\ldots,\alpha_{i_t}$ are pairwise distinct and $G(\alpha_{i_j})\ne 0$ for all $1\le j\le t$, both $\mathbf{V}$ and $\mathbf{D}$ are invertible. Therefore, \eqref{eq-linear-system} is equivalent to 
	$$
	\mathbf{VD}
	\begin{pmatrix}
		c_{i_1}\\
		c_{i_2}\\
		\vdots\\
		c_{i_t}
	\end{pmatrix}
	=
	-\frac{c_{i_{t+1}}}{G(\alpha_{i_{t+1}})}
	\begin{pmatrix}
		1\\
		\alpha_{i_{t+1}}\\
		\vdots\\
		\alpha_{i_{t+1}}^{t-1}
	\end{pmatrix}.
	$$
	Then,	
	\begin{equation}\label{eq-coordinates}
		\begin{pmatrix}
			c_{i_1}\\
			c_{i_2}\\
			\vdots\\
			c_{i_t}
		\end{pmatrix}
		=
		-\frac{c_{i_{t+1}}}{G(\alpha_{i_{t+1}})}
		\mathbf{D}^{-1}\mathbf{V}^{-1}
		\begin{pmatrix}
			1\\
			\alpha_{i_{t+1}}\\
			\vdots\\
			\alpha_{i_{t+1}}^{t-1}
		\end{pmatrix}.
	\end{equation}
	By the Lagrange interpolation formula, the $j$-th entry of $\mathbf{V}^{-1}(1,\alpha_{i_{t+1}},\ldots,\alpha_{i_{t+1}}^{t-1})^T$ is
	$$
	\prod_{\substack{\ell=1\\ \ell\ne j}}^t
	\frac{\alpha_{i_{t+1}}-\alpha_{i_\ell}}{\alpha_{i_j}-\alpha_{i_\ell}}.
	$$
	On the other hand, since
	$
	H(x)=\prod_{\ell=1}^{t+1}(x-\alpha_{i_\ell}),
	$
	we have
	$$
	H'(\alpha_{i_{t+1}})=\prod_{\ell=1}^t(\alpha_{i_{t+1}}-\alpha_{i_\ell})
	\text{ and }
	H'(\alpha_{i_j})=(\alpha_{i_j}-\alpha_{i_{t+1}})\prod_{\substack{\ell=1\\ \ell\ne j}}^t(\alpha_{i_j}-\alpha_{i_\ell}).
	$$
	It follows that
	\begin{equation}\label{eq-lagrange-derivative}
		\prod_{\substack{\ell=1\\ \ell\ne j}}^t
		\frac{\alpha_{i_{t+1}}-\alpha_{i_\ell}}{\alpha_{i_j}-\alpha_{i_\ell}}
		=
		-\frac{H'(\alpha_{i_{t+1}})}{H'(\alpha_{i_j})}.
	\end{equation}
	Substituting \eqref{eq-lagrange-derivative} into \eqref{eq-coordinates}, we obtain
	\begin{equation}\label{eq-cij-G}
		c_{i_j}
		=
		c_{i_{t+1}}
		\frac{G(\alpha_{i_j})}{G(\alpha_{i_{t+1}})}
		\frac{H'(\alpha_{i_{t+1}})}{H'(\alpha_{i_j})},
		\qquad 1\le j\le t.
	\end{equation}
	Since $G(\alpha_{i_j})=V(\alpha_{i_j})H'(\alpha_{i_j})$ for every $1\le j\le t+1$, Equation \eqref{eq-cij-G} gives that
	\begin{equation}\label{eq-cij-V}
		c_{i_j}=c_{i_{t+1}}\frac{V(\alpha_{i_j})}{V(\alpha_{i_{t+1}})},\qquad 1\le j\le t.
	\end{equation}
	
	We first prove the sufficiency. Assume that \eqref{eq-v-condition} holds. For any $c_{i_{t+1}}\in\mathbb F_q^*$, define
	$$
	c_{i_j}=c_{i_{t+1}}\frac{V(\alpha_{i_j})}{V(\alpha_{i_{t+1}})},\qquad 1\le j\le t,
	$$
	and set $c_i=0$ for $i\notin\{i_1,\ldots,i_{t+1}\}$. 
	Since $V(\alpha_{i_{t+1}})/V(\alpha_{i_j})\in\mathbb F_q^*$ for $1\le j\le t$, we have $\operatorname{wt}(\mathbf{c})=t+1$.
	One can easily verify that $\mathbf{c}$ satisfies \eqref{eq-linear-system} and belongs to $\Gamma(L,G)$. Thus $d(\Gamma(L,G))\le t+1$. Combining this with the Goppa bound
	$
	d(\Gamma(L,G))\ge t+1
	$
	yields
	$
	d(\Gamma(L,G))=t+1.
	$
	
	Conversely, suppose that $\Gamma(L,G)$ contains a codeword $\mathbf{c}\in\mathbb F_q^n$ with $\operatorname{supp}(\mathbf{c})=\{i_1,\ldots,i_{t+1}\}$. Then $c_{i_1},\ldots,c_{i_{t+1}}\in\mathbb F_q^*$. Since $\mathbf{c}\in\Gamma(L,G)$, its nonzero coordinates satisfy \eqref{eq-linear-system}. Repeating the Vandermonde computation above, we obtain \eqref{eq-cij-V}. Therefore,
	$$
	\frac{V(\alpha_{i_{t+1}})}{V(\alpha_{i_j})}
	=
	\frac{c_{i_{t+1}}}{c_{i_j}}
	\in\mathbb F_q^*,\qquad 1\le j\le t.
	$$
	This completes the proof.
\end{proof}

\begin{remark}
	Theorem~\ref{thm-1} provides a characterization of Goppa codes
	$\Gamma(L,G)$ that attain their designed distance. More precisely, for
	a Goppa code with Goppa polynomial of the form \eqref{eq-Gx} and $\deg(G)=t$, the
	attainment of the designed distance is completely determined by the
	values of $V(x)$ at the roots of $H(x)$. 
	This characterization also provides an explicit way to describe
	Goppa polynomials and their supports such that the corresponding codes
	attain their designed distance.
\end{remark}

Based on Theorem~\ref{thm-1}, we next present a general family of Goppa codes satisfying the above criterion by developing an interpolation-based construction of Goppa polynomials. For simplicity,
in the rest of this section, we denote the pairwise distinct roots of
$H(x)$ contained in $L$ by
$\alpha_1,\alpha_2,\ldots,\alpha_{t+1}$.
If the values of $V(\alpha_i)$ are chosen to satisfy
$$
\frac{V(\alpha_{t+1})}{V(\alpha_i)}\in\mathbb F_q^*,
\qquad 1\leq i\leq t,
$$
then \eqref{eq-v-condition} holds automatically.

\begin{theorem}\label{thm-2}
		Let $H(x)\in \Bbb F_q[x]$ be a monic irreducible polynomial with $\deg(H) = t+1$ and $\alpha_1,\alpha_2,\ldots,\alpha_{t+1}$ be the roots of
	$H(x)$ contained in $L$.
	Let $\omega \in \Bbb F_{q^m}^*$ and $\lambda_i,\beta_i\in \mathbb{F}_q^*$ for $1\leq i\leq t+1$.
	Assume that
	$
	\sum_{i=1}^{t+1}\lambda_i\beta_i\neq 0.
	$
	Define
	$$
	G(x)=\omega \sum_{i=1}^{t+1}\lambda_i\beta_i
	\frac{H(x)}{x-\alpha_i}.
	$$
	Further assume that
	$
	G(\eta)\neq 0
	$ for every $\eta\in L$.
	Then the Goppa code $\Gamma(L,G)$ has minimum distance
	$
	d(\Gamma(L,G))=t+1.
	$
\end{theorem}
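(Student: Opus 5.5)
The plan is to put $G(x)$ in the form \eqref{eq-Gx} and then apply Theorem~\ref{thm-1}.

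First, $G$ is a Lagrange-type interpolant. Each $H(x)/(x-\alpha_i)$ has degree $t$ and leading coefficient $1$. So the coefficient of $x^t$ in $G$ is $\omega\sum_{i}\lambda_i\beta_i$. This is nonzero because $\omega\neq0$ and $\sum_i\lambda_i\beta_i\neq0$. Hence $\deg(G)=t$.

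Next, evaluate $G$ at the roots. For $j\neq i$, the term $H(x)/(x-\alpha_i)$ vanishes at $\alpha_j$. At $x=\alpha_i$ it equals $\prod_{\ell\ne i}(\alpha_i-\alpha_\ell)=H'(\alpha_i)$. Therefore
$$
G(\alpha_j)=\omega\lambda_j\beta_j H'(\alpha_j),\qquad 1\le j\le t+1.
$$

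Now produce $U$ and $V$. Take $V(x)$ to be the polynomial of degree at most $t$ with $V(\alpha_j)=\omega\lambda_j\beta_j$ for all $j$; Lagrange interpolation gives it. Then $G(x)-V(x)H'(x)$ vanishes at every root $\alpha_j$ of $H$, because $G(\alpha_j)=V(\alpha_j)H'(\alpha_j)$. Since $H$ is separable with roots $\alpha_1,\ldots,\alpha_{t+1}$, $H$ divides $G-VH'$. So there is $U(x)\in\mathbb F_{q^m}[x]$ with $G=UH+VH'$, which is exactly \eqref{eq-Gx} with $\deg(G)=t$.

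Finally, apply Theorem~\ref{thm-1}. We have
$$
\frac{V(\alpha_{t+1})}{V(\alpha_j)}=\frac{\lambda_{t+1}\beta_{t+1}}{\lambda_j\beta_j}\in\mathbb F_q^*,
$$
because $\omega$ cancels and all $\lambda_i,\beta_i$ lie in $\mathbb F_q^*$. The standing hypotheses of the section also hold: the roots of $H$ lie in $L$, and $G(\eta)\neq0$ on $L$ is assumed. So Theorem~\ref{thm-1} yields $d(\Gamma(L,G))=t+1$.

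Alternatively, one can bypass $V$ and argue directly from \eqref{eq-cij-G}. There the ratio $\frac{G(\alpha_j)H'(\alpha_{t+1})}{G(\alpha_{t+1})H'(\alpha_j)}=\frac{\lambda_j\beta_j}{\lambda_{t+1}\beta_{t+1}}$ already lies in $\mathbb F_q^*$.

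I expect the only step needing care to be the existence of $U$. This rests on separability of $H$, which the preliminaries have already established. Everything else is direct computation.
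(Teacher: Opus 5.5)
Your proof is correct and follows the paper's route. You interpolate $V$ with $V(\alpha_j)=\omega\lambda_j\beta_j$, match $G(\alpha_j)=V(\alpha_j)H'(\alpha_j)$, read off $\deg(G)=t$ from the leading coefficient $\omega\sum_i\lambda_i\beta_i$, and invoke Theorem~\ref{thm-1}. The only difference is direction: the paper starts from $V$ and recovers the given $G$ as the degree-$\le t$ interpolant of $VH'$ modulo $H$, whereas you start from the given $G$ and exhibit $U$ explicitly via $H\mid G-VH'$, which makes the decomposition \eqref{eq-Gx} required by Theorem~\ref{thm-1} slightly more transparent.
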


\begin{proof}
Assume that $\omega \in \Bbb F_{q^m}^*$ and
$
\lambda_i,\beta_i\in \mathbb{F}_q^*
$
for
$
1\leq i\leq t+1.
$
Let
$$
V(x)=\omega \sum_{i=1}^{t+1}\lambda_i\beta_i L_i(x),
$$
where
$$
L_i(x)
=
\prod_{\substack{1\leq k\leq t+1\\ k\neq i}}
\frac{x-\alpha_k}{\alpha_i-\alpha_k}.
$$
Furthermore, it is clear that 
$
L_i(x)
=
\frac{H(x)}{(x-\alpha_i)H'(\alpha_i)}
$ since
$
H(x)=\prod_{k=1}^{t+1}(x-\alpha_k).
$
By the basic property of the Lagrange polynomials, we have
$$
L_i(\alpha_j)
=
\begin{cases}
	1, & i=j,\\
	0, & i\neq j.
\end{cases}
$$
Hence,
it is easy to check that
$
V(\alpha_j)
=
\omega \sum_{i=1}^{t+1}\lambda_i\beta_i L_i(\alpha_j)
=
\omega \lambda_j\beta_j
$  for every
$
1\leq j\leq t+1,
$
and
$$
\frac{V(\alpha_{t+1})}{V(\alpha_j)}
=
\frac{\lambda_{t+1}\beta_{t+1}}{\lambda_j\beta_j}
\in \mathbb{F}_q^*
$$
since
$
\lambda_i,\beta_i\in \mathbb{F}_q^*
$
for any
$
1\leq i\leq t+1.
$
Thus \eqref{eq-v-condition} holds.
Moreover, from \eqref{eq-Gx}, we have
$$
G(x)\equiv V(x)H'(x)\pmod{H(x)}.
$$
Considering this congruence at each root $\alpha_i$ of $H(x)$ gives that
$$
G(\alpha_i)=V(\alpha_i)H'(\alpha_i),
\qquad 1\leq i\leq t+1.
$$
By the Lagrange interpolation formula, the polynomial
$G(x)$ can be uniquely determined by its values at
$
\alpha_1,\alpha_2,\ldots,\alpha_{t+1}
$ since $\deg (G)\le t$.
Thus
$$
G(x)
=
\sum_{i=1}^{t+1}
G(\alpha_i)L_i(x).
$$
Substituting
$
L_i(x)=\dfrac{H(x)}{(x-\alpha_i)H'(\alpha_i)}
$
and
$
G(\alpha_i)=V(\alpha_i)H'(\alpha_i)
$
into the above identity, we get
$$
G(x)
=
\sum_{i=1}^{t+1}
V(\alpha_i)
\frac{H(x)}{x-\alpha_i}.
$$
Since
$
V(\alpha_i)=\omega \lambda_i\beta_i,
$
we further obtain
$$
G(x)
=
\omega \sum_{i=1}^{t+1}
\lambda_i\beta_i
\frac{H(x)}{x-\alpha_i}.
$$
By assumption, $\omega \ne 0$ and 
$
\sum_{i=1}^{t+1}\lambda_i\beta_i\neq 0.
$
Since $H(x)$ is monic of degree $t+1$, the coefficient of $x^t$ in $G(x)$ is nonzero. Then we have $\deg(G)=t$.
It directly follows from Theorem~\ref{thm-1} that the Goppa code $\Gamma(L,G)$ attains its
designed distance, that is,
$
d(\Gamma(L,G))=t+1.
$
We then get the desired conclusion.
\end{proof}

\begin{remark}
	Theorem~\ref{thm-2} yields a general family of Goppa codes that
	attain their designed distance. By varying $H(x)$, $\omega$, and
	$\lambda_i,\beta_i$ subject to the conditions in this theorem, the
	Goppa polynomials can be explicitly given, and the corresponding codes
	attain their designed distance.
\end{remark}

We then present several examples of Goppa codes obtained from
Theorem~\ref{thm-2}, whose parameters are computed by Magma.

\begin{example}
We illustrate Theorem~\ref{thm-2} with two computational examples. In each
case, $H(x)$ is irreducible over $\mathbb F_q$. If $\alpha$ is a root
of $H(x)$, then its roots are ordered as
$
\alpha_i=\alpha^{q^{i-1}}
$ for $1\le i\le t+1$.
For the given values of $\omega,\lambda_i$ and $\beta_i$, we construct
$
G(x)=\omega\sum_{i=1}^{t+1}\lambda_i\beta_i
\frac{H(x)}{x-\alpha_i}.
$
The support $L$ is chosen to contain all roots of $H(x)$ and
satisfies $G(\eta)\neq 0$ for all $\eta\in L$.
	
	\begin{enumerate}
		\item Let $q=2$ and
		$
		H(x)=x^3+x+1.
		$
		Let $\alpha$ be a root of $H(x)$. Then
		$
		\alpha^3=\alpha+1.
		$
		The roots of $H(x)$ are
		$
		\alpha_1=\alpha,\
		\alpha_2=\alpha^2,\ \text{and }
		\alpha_3=\alpha^4=\alpha^2+\alpha.
		$
		Take
		$
		\omega=\alpha+1,
		$
		$
		(\lambda_1,\lambda_2,\lambda_3)=(1,1,1)
		$, and $
		(\beta_1,\beta_2,\beta_3)=(1,1,1).
		$
		Then
		$$
		G(x)
		=
		(\alpha+1)
		\sum_{i=1}^{3}\lambda_i\beta_i
		\frac{x^3+x+1}{x-\alpha_i}=	(\alpha+1)x^2+\alpha+1
		\in\mathbb F_{2^3}[x].
		$$
		Choose a support $L\subseteq\mathbb F_{2^3}$ such that
		$$
		\{\alpha,\alpha^2,\alpha^4\}\subseteq L
		\quad\text{and}\quad
		G(\eta)\ne 0\quad\text{for all }\eta\in L.
		$$
		For example, we take
		$
		|L|=7.
		$
		The code $\Gamma(L,G)$ has parameters $[7,4,3]$ and is optimal according to the database \cite{G}.
		In particular, $d\left(\Gamma(L,G)\right)=\deg H$.

	\item Let $q=3$ and $ H(x)=x^3+x^2+x+2$. Let $\alpha$ be a root of $H(x)$. Then $ \alpha^3=2\alpha^2+2\alpha+1. $ The roots of $H(x)$ are $ \alpha_1=\alpha$, $ \alpha_2=2\alpha^2+2\alpha+1$, and $ \alpha_3=\alpha^2+1. $ Take $ \omega=\alpha+1$, $ (\lambda_1,\lambda_2,\lambda_3)=(1,1,2)$, and $ (\beta_1,\beta_2,\beta_3)=(1,1,1). $ 
	Then $$ G(x) = (\alpha+1) \sum_{i=1}^{3}\lambda_i\beta_i \frac{x^3+x^2+x+2}{x-\alpha_i}=(\alpha+1)x^2 + 2x + (2\alpha^2+2\alpha) \in\mathbb F_{3^3}[x]. $$
	 Choose a support $L\subseteq\mathbb F_{3^3}$ such that $$ \{\alpha,\alpha^3,\alpha^9\}\subseteq L \quad\text{and}\quad G(\eta)\ne 0\quad\text{for all }\eta\in L. $$ For example, we take $ |L|=20. $ The parameters of $\Gamma(L,G)$ are $[20,14,3]$.
	 In particular, $d\left(\Gamma(L,G)\right)=\deg H$.
	 \end{enumerate}
\end{example}

Let $a\in\mathbb{F}_q^*$ satisfy the two conditions in
Lemma~\ref{lem-binomial-irreducible} with $s=t+1$. Then
$
H(x)=x^{t+1}-a
$
is irreducible over $\mathbb{F}_q$. Substituting such H(x) into Theorem \ref{thm-2} yields the following explicit
class of Goppa polynomials and the  corresponding codes
attain their designed distance.

\begin{remark}\label{rem-binomial}
	Let $a\in\mathbb{F}_q^*$ satisfy the conditions in
	Lemma~\ref{lem-binomial-irreducible} with $s=t+1$. Then
	$
	H(x)=x^{t+1}-a
	$
	is irreducible over $\mathbb{F}_q$. Assume that its roots are
	$\alpha_1,\ldots,\alpha_{t+1}\in L$. By Theorem \ref{thm-2}, the following
	 family of Goppa polynomials gives Goppa codes $\Gamma(L,G)$ with designed distance:
	\[
	G(x)=\omega \sum_{i=1}^{t+1}\lambda_i\beta_i
	\frac{x^{t+1}-a}{x-\alpha_i},
	\quad
	\omega \in \Bbb F_{q^m}^*,
	\quad
	\lambda_i,\beta_i\in\mathbb{F}_q^*,
	\quad \text{and} \quad
	\sum_{i=1}^{t+1}\lambda_i\beta_i\neq 0.
	\]
	Provided that $G(\eta)\neq 0$ for every $\eta\in L$,
	we have
	$
	d(\Gamma(L,G))=t+1.
	$
\end{remark}

\begin{example}
	We give several computational examples arising from
	Remark~\ref{rem-binomial}. Let
	$
	s=t+1.
	$
	For each row, the polynomial
	$
	H(x)=x^s-a
	$
	is irreducible over $\mathbb F_q$ by
	Lemma~\ref{lem-binomial-irreducible}. Let $\alpha$ be a root of $H(x)$ and
	order the roots as
	$
	\alpha_i=\alpha^{q^{i-1}}
	$ for $1\le i\le s$.
	For each row in Table \ref{tab-examples-binomial-construction}, the Goppa polynomial is defined by
	$$
	G(x)
	=
	\omega\sum_{i=1}^{s}\lambda_i\beta_i
	\frac{H(x)}{x-\alpha_i}
	\in \mathbb F_{q^m}[x]
	$$
	and is omitted from the table since they may become lengthy.
The support $L$ is chosen to contain all roots of $H(x)$ and
satisfies $G(\eta)\neq 0$ for all $\eta\in L$. Furthermore, once $G(x)$ is
	fixed, the length of $L$ can vary from $\deg H$ up to
	$
	q^m-|\{\eta\in\mathbb F_{q^m}:G(\eta)=0\}|.
	$
	Hence, different choices of $L$ may lead to different Goppa codes $\Gamma(L,G)$. All the parameters of the corresponding Goppa codes
	$\Gamma(L,G)$ in Table \ref{tab-examples-binomial-construction} are computed by Magma. In particular, we write
	$
	\mathbb F_4=\mathbb F_2(\theta)
	$
	with
	$
	\theta^2+\theta+1=0
	$
	and
	$
	\mathbb F_9=\mathbb F_3(\theta)
	$
	with
	$
	\theta^2+1=0
	$
	for the rows with $q=4$ and $q=9$, respectively.
	
		\begin{table}[htbp]
		\centering
		\caption{Examples from the irreducible-binomial construction in Remark~\ref{rem-binomial}}
		\label{tab-examples-binomial-construction}
		\renewcommand{\arraystretch}{1.2}
		\begin{tabular}{|c|c|c|c|c|c|c|c|c|}
			\hline
			$q$ & $t$ & $m$ & $H(x)$ & $\omega$
			& $(\lambda_i)_{i=1}^{s}$ & $(\beta_i)_{i=1}^{s}$
			& $n$ & $[n,k,d]$\\
			\hline
			
			\multirow{2}{*}{$3$}
			& $1$ & $2$ & $x^2-2$
			& $\alpha+1$
			& $(1,1)$
			& $(1,1)$
			& $6$
			& $[6, 4, 2]$
			\\
			\cline{2-9}

			& $1$ & $2$ & $x^2-2$
			& $2\alpha+1$
			& $(2,1)$
			& $(1,2)$
			& $7$
			& $[7,5,2]$
			\\
			\hline

			\multirow{2}{*}{$4$}
			& $2$ & $3$
			& $x^3-\theta$
			& $\alpha+\theta$
			& $(1,\theta,\theta+1)$
			& $(1,1,\theta)$
			& $20$
			& $[20,14,3]$
			\\
			\cline{2-9}
			
			& $2$ & $3$
			& $x^3-(\theta+1)$
			& $\alpha+\theta+1$
			& $(\theta,\theta+1,1)$
			& $(1,\theta+1,\theta)$
			& $60$
			& $[60,54,3]$
			\\
			\hline

			\multirow{3}{*}{$5$}
			& $1$ & $2$
			& $x^2-3$
			& $2$
			& $(1,3)$
			& $(4,1)$
			& $10$
			& $[10,8,2]$
			\\
			\cline{2-9}

			& $3$ & $4$
			& $x^4-2$
			& $2$
			& $(1,2,1,3)$
			& $(1,1,1,1)$
			& $27$
			& $[27,15,4]$
			\\
			\cline{2-9}

			& $3$ & $4$
			& $x^4-3$
			& $3$
			& $(1,4,2,3)$
			& $(2,1,3,2)$
			& $53$
			& $[53,41,4]$
			\\
			\hline

			\multirow{4}{*}{$7$}
			& $2$ & $3$
			& $x^3-4$
			& $\alpha+2$
			& $(2,5,6)$
			& $(3,1,3)$
			& $50$
			& $[50,44,3]$
			\\
			\cline{2-9}
			
			& $2$ & $3$
			& $x^3-2$
			& $3$
			& $(1,4,2)$
			& $(2,3,1)$
			& $123$
			& $[123,117,3]$
			\\
			\cline{2-9}
			
		    & $5$ & $6$ & $x^6-5$ & $\alpha+2$
			& $(2,3,1,5,6,1)$ & $(3,1,3,2,3,4)$ & $63$ & $[63,33,6]$
			\\
			\cline{2-9}
			
			& $5$ & $6$
			& $x^6-3$
			& $2\alpha+1$
			& $(1,2,4,3,5,6)$
			& $(2,1,3,4,1,6)$
			& $85$
			& $[85,55,6]$
			\\
			\hline

			\multirow{2}{*}{$9$}
			& $3$ & $4$
			& $x^4-(\theta+1)$
			& $\alpha+2$
			& $(1,\theta,2,\theta+1)$
			& $(\theta+1,1,\theta,2+\theta)$
			& $50$
			& $[50,38,4]$
			\\
			\cline{2-9}
			
			& $3$ & $4$
			& $x^4-(2+\theta)$
			& $\alpha+\theta+1$
			& $(\theta,\theta+1,2,2+2\theta)$
			& $(1,2+\theta,\theta,\theta+1)$
			& $80$
			& $[80,68,4]$
			\\
			\hline
			
		\end{tabular}
	\end{table}

		In each row of Table~\ref{tab-examples-binomial-construction}, one checks that
	$
	\sum_{i=1}^{s}\lambda_i\beta_i\neq 0.
	$
	Hence Theorem~\ref{thm-2} applies, and the corresponding Goppa code satisfies
	$
	d(\Gamma(L,G))=\deg H.
	$
\end{example}

\begin{remark}\label{rem-trinomial}
	By Lemma~\ref{lem-trinomial-irreducible},
	$
	H(x)=x^p-x-a
	$
	is irreducible over $\mathbb{F}_q$ if $\operatorname{Tr}_{\mathbb{F}_q/\mathbb{F}_p}(a)\neq 0$ for $a\in\mathbb{F}_q$. Assume that the roots of $H(x)$ are
	$
	\alpha_1,\alpha_2,\ldots,\alpha_p\in L.
	$
	Substituting this choice of $H(x)$ into the Theorem \ref{thm-2} gives the
	following family of Goppa polynomials:
	\[
	G(x)=\omega \sum_{i=1}^{p}\lambda_i\beta_i
	\frac{x^p-x-a}{x-\alpha_i},
	\qquad \omega \in \Bbb F_{q^m}^*, \quad
	\lambda_i,\beta_i\in\mathbb{F}_q^*,
	\quad \text{and}\quad
	\sum_{i=1}^{p}\lambda_i\beta_i\neq 0.
	\]
	Provided that $G(\eta)\neq 0$ for every $\eta\in L$,
	the corresponding Goppa codes satisfy
	$
	d(\Gamma(L,G))=p.
	$
\end{remark}

\begin{example}
	We give several computational examples arising from
	Remark~\ref{rem-trinomial}.
     For each row in
	Table~\ref{tab-examples-trinomial-construction}, the polynomial
	$
	H(x)=x^p-x-a
	$
	is irreducible over $\mathbb F_q$ by
	Lemma~\ref{lem-trinomial-irreducible}. Let $\alpha$ be a root of $H(x)$ and
	order the roots as
	$
	\alpha_i=\alpha^{q^{i-1}}
	$ for $1\le i\le p$.
	For the chosen values of $\omega,\lambda_i$ and $\beta_i$, we construct
	$$
	G(x)=\omega\sum_{i=1}^{p}\lambda_i\beta_i
	\frac{H(x)}{x-\alpha_i}.
	$$
	The support $L$ is chosen to contain all roots of $H(x)$ and
satisfies $G(\eta)\neq 0$ for all $\eta\in L$. The parameters of the corresponding Goppa codes
	$\Gamma(L,G)$ in Table \ref{tab-examples-trinomial-construction} are computed by Magma.
	In particular, we write
	$
	\mathbb F_4=\mathbb F_2(\theta)$ with $
	\theta^2+\theta+1=0
	$ 
	and $
	\mathbb F_8=\mathbb F_2(\theta)$ with $
	\theta^3+\theta+1=0
	$ 
	for the row with $q=4$ and $q=8$, respectively.

	\begin{table}[htbp]
		\centering
		\caption{Examples from the irreducible-trinomial construction in Remark \ref{rem-trinomial}}
		\label{tab-examples-trinomial-construction}
		\renewcommand{\arraystretch}{1.2}
			\begin{tabular}{|c|c|c|c|c|c|c|c|c|c|c|}
				\hline
				$q$ & $p$ & $t$ & $m$ & $a$ & $H(x)$ & $\omega$
				& $(\lambda_i)_{i=1}^{p}$ & $(\beta_i)_{i=1}^{p}$& n & $[n,k,d]$\\
				\hline
				$3$ & $3$ & $2$ & $3$ & $1$
				& $x^3-x-1$
				& $\alpha+1$
				& $(1,1,2)$
				& $(1,1,1)$ & 8
				& $[8,2,3]$\\
				\hline
				$4$ & $2$ & $1$ & $2$ & $\theta$
				& $x^2+x+\theta$
				& $\alpha+1$
				& $(1,1)$
				& $(1,\theta)$ & 13
				& $[13,11,2]$\\
				\hline
				$5$ & $5$ & $4$ & $5$ & $1$
				& $x^5-x-1$
				& $2$
				& $(1,2,3,4,1)$
				& $(1,2,1,3,2)$& 120
				& $[120,100,5]$\\
				\hline
				$7$ & $7$ & $6$ & $7$ & $1$
				& $x^7-x-1$
				& $\alpha+2$
				& $(1,2,3,4,5,6,1)$
				& $(1,1,1,1,1,1,1)$& 69
				& $[69,27,7]$\\
				\hline
				$8$ & $2$ & $1$ & $2$ & $1$ & $x^2+x+1$
				& $\alpha+1$ & $(1,1)$ & $(1,\theta)$ & $38$ & $[38,36,2]$\\
				\hline
			\end{tabular}
	\end{table}
	
	In each row of Table~\ref{tab-examples-trinomial-construction}, one checks that
	$
	\sum_{i=1}^{p}\lambda_i\beta_i\neq 0.
	$
	Hence Theorem~\ref{thm-2} applies, and the corresponding Goppa code satisfies
	$
	d(\Gamma(L,G))=p=\deg H.
	$
\end{example}

\section{Explicit Families of Goppa Codes from the Auxiliary Polynomial $V(x)$}\label{sec-V}
We follow the notation in Section~\ref{sec-criterion} and assume that
$t+1\mid m$. Let $H(x)\in\mathbb F_q[x]$ be a monic irreducible
polynomial with $\deg(H)=t+1$.
Let
$
L=\{\alpha_1,\ldots,\alpha_n\}\subseteq\mathbb F_{q^m}
$
be the support of the Goppa code, and suppose that all roots of $H(x)$
are contained in $L$. We further assume that the Goppa polynomial
$G(x)$ is given by \eqref{eq-Gx}. 
The criterion in Theorem~\ref{thm-1} shows that the exact minimum distance
of $\Gamma(L,G)$ can be determined by the values of the auxiliary
polynomial $V(x)$ at the roots of $H(x)$. In this section, we use this criterion to obtain explicit families
of Goppa codes that attain their designed distance by choosing suitable
auxiliary polynomials $V(x)$.

We first consider a monomial
$
V(x)=\lambda x^\mu .
$
By imposing some conditions on the exponent $\mu$, the values of
$V(x)$ at the roots of $H(x)$ satisfy 
\eqref{eq-v-condition} in Theorem~\ref{thm-1}. Moreover, a trace condition
ensures that $\deg(G)=t$.
Consequently, the associated Goppa code attains its designed distance. This leads to the following monomial construction.

\begin{theorem}\label{thm-monomial-V-general}
	Let $H(x)\in\mathbb{F}_q[x]$ be a monic irreducible polynomial of degree $t+1$ and $\alpha\in\mathbb{F}_{q^{t+1}}^*$ be a root of $H(x)$. Then
	$
	H(x)=\prod_{i=0}^{t}\left(x-\alpha^{q^i}\right).
	$
	Write $\alpha_i=\alpha^{q^{i-1}}$ for $1\leq i\leq t+1$, and assume that
	$\alpha_1,\alpha_2,\ldots,\alpha_{t+1}\in L$.
	Let
	$
	V(x)=\lambda x^\mu
	$,
	where $\lambda\in\mathbb{F}_{q^m}^*$ and $\mu>0$. Let $G(x)$ be the unique
	remainder of $V(x)H'(x)$ modulo $H(x)$,  which is given by \eqref{eq-Gx}.
	Assume that $G(\eta)\neq0$ for every $\eta\in L$. Set
	$
	N_\alpha=\operatorname{ord}(\alpha).
	$
	If
	$$
	\operatorname{Tr}_{\mathbb{F}_{q^{t+1}}/\mathbb{F}_q}(\alpha^\mu)\neq0
	$$
	and
	$$
	\frac{N_\alpha}{\gcd\left(N_\alpha,(q-1)^2\right)}\mid \mu,
	$$
	then 
	$
	d(\Gamma(L,G))=t+1.
	$
\end{theorem}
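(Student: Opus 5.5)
We will deduce the statement from Theorem~\ref{thm-1}. Two things must be checked: that $\deg(G)=t$, and that condition \eqref{eq-v-condition} holds for $V(x)=\lambda x^\mu$ at the roots $\alpha_i=\alpha^{q^{i-1}}$.

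\emph{Step 1: condition \eqref{eq-v-condition}.} Here
$$V(\alpha_{t+1})/V(\alpha_j)=\alpha^{\mu(q^t-q^{j-1})}=\left(\alpha^{\mu(q-1)}\right)^{q^{j-1}(q^{t-j}+\cdots+1)}.$$
So it suffices to show $\alpha^{\mu(q-1)}\in\mathbb F_q^*$. That is equivalent to $(\alpha^{\mu(q-1)})^{q-1}=1$, i.e. $N_\alpha\mid \mu(q-1)^2$. This in turn is equivalent to $N_\alpha/\gcd(N_\alpha,(q-1)^2)\mid\mu$, which is exactly the hypothesis. Since $\mathbb F_q^*$ is a group, every ratio $V(\alpha_{t+1})/V(\alpha_j)$ lies in $\mathbb F_q^*$. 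The factor $\lambda$ cancels in each ratio.

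\emph{Step 2: $\deg(G)=t$.} By construction, $G$ is the remainder of $VH'$ modulo $H$, so $\deg G\le t$ and $G(\alpha_i)=V(\alpha_i)H'(\alpha_i)$. Arguing as in the proof of Theorem~\ref{thm-2} via Lagrange interpolation,
$$G(x)=\sum_{i=1}^{t+1}V(\alpha_i)\frac{H(x)}{x-\alpha_i}.$$
The coefficient of $x^t$ is therefore
$$\sum_i V(\alpha_i)=\lambda\sum_{i=0}^{t}(\alpha^{\mu})^{q^i}=\lambda\operatorname{Tr}_{\mathbb F_{q^{t+1}}/\mathbb F_q}(\alpha^\mu).$$
This is nonzero by the trace hypothesis, so $\deg G=t$. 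Writing $VH'=UH+G$ gives $G$ the form \eqref{eq-Gx}.

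\emph{Step 3: conclusion.} Since $G(\eta)\ne0$ on $L$, the code $\Gamma(L,G)$ is well defined, and Theorem~\ref{thm-1} gives $d(\Gamma(L,G))=t+1$. The main subtlety is Step 1, namely correctly reducing the ratio condition to $\alpha^{\mu(q-1)}\in\mathbb F_q^*$ and then to the divisibility condition. The remaining steps are routine.
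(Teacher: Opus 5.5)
Your proposal is correct and follows essentially the same route as the paper: the leading coefficient of $G$ is computed by Lagrange interpolation as $\lambda\operatorname{Tr}_{\mathbb{F}_{q^{t+1}}/\mathbb{F}_q}(\alpha^\mu)$, and the ratio condition is reduced to $N_\alpha\mid\mu(q-1)^2$ before invoking Theorem~\ref{thm-1}. The only difference is that you factor the exponent through $\alpha^{\mu(q-1)}$, which is slightly tidier than the paper's step of cancelling the power of $q$ via $\gcd(N_\alpha,q)=1$.
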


\begin{proof}
	Note that
	$
	G(x)=U(x)H(x)+V(x)H'(x).
	$
	Then, for each root $\alpha_i$ of $H(x)$, we have
	$$
	G(\alpha_i)=V(\alpha_i)H'(\alpha_i),
	\qquad 1\leq i\leq t+1.
	$$
	
	We first prove that $\deg(G)=t$. Since $\deg(G)\leq t$ and
	$G(\alpha_i)=V(\alpha_i)H'(\alpha_i)$ for $1\leq i\leq t+1$, the Lagrange
	interpolation formula gives
	$$
	G(x)
	=
	\sum_{i=1}^{t+1}
	G(\alpha_i)
	\frac{H(x)}{(x-\alpha_i)H'(\alpha_i)}.
	$$
	Substituting $G(\alpha_i)=V(\alpha_i)H'(\alpha_i)$ into the above identity,
	we obtain
	$$
	G(x)
	=
	\sum_{i=1}^{t+1}
	V(\alpha_i)\frac{H(x)}{x-\alpha_i}.
	$$
	Since $V(x)=\lambda x^\mu$ and $\alpha_i=\alpha^{q^{i-1}}$, we have
	$$
	G(x)
	=
	\lambda \sum_{i=1}^{t+1}
	\alpha_i^\mu
	\frac{H(x)}{x-\alpha_i}.
	$$
	Each polynomial $H(x)/(x-\alpha_i)$ is monic of degree $t$. Hence the
	coefficient of $x^t$ in $G(x)$ is
	$
	\lambda \sum_{i=1}^{t+1}\alpha_i^\mu.
	$
	Note that $\alpha_i=\alpha^{q^{i-1}}$. Then we have
	$$
	\sum_{i=1}^{t+1}\alpha_i^\mu
	=
	\sum_{i=0}^{t}(\alpha^\mu)^{q^i}
	=
	\operatorname{Tr}_{\mathbb{F}_{q^{t+1}}/\mathbb{F}_q}(\alpha^\mu).
	$$
	By assumption,
	$
	\operatorname{Tr}_{\mathbb{F}_{q^{t+1}}/\mathbb{F}_q}(\alpha^\mu)\neq0.
	$
	Since $\lambda\neq0$, the coefficient of $x^t$ in $G(x)$ is nonzero and
	$
	\deg(G)=t.
	$
	
	It remains to verify \eqref{eq-v-condition} in
	Theorem~\ref{thm-1}. For $1\leq i\leq t$, we have
	$$
	\frac{V(\alpha_{t+1})}{V(\alpha_i)}
	=
	\frac{\lambda\alpha^{\mu q^t}}
	{\lambda\alpha^{\mu q^{i-1}}}
	=
	\alpha^{\mu(q^t-q^{i-1})}.
	$$
	We prove that this element belongs to $\mathbb{F}_q^*$. Since
	$N_\alpha=\operatorname{ord}(\alpha)$, we have
	$
	\alpha^k\in\mathbb{F}_q^*
	$
	if and only if
	$
	(\alpha^k)^q=\alpha^k
	$ for any integer $k$,
	which is equivalent to
	$$
	N_\alpha\mid k(q-1).
	$$
	Thus it suffices to prove that
	$$
	N_\alpha\mid \mu(q^t-q^{i-1})(q-1),
	\qquad 1\leq i\leq t.
	$$
	Since $\gcd(N_\alpha,q)=1$, this is equivalent to
	$$
	N_\alpha\mid \mu(q^{t-i+1}-1)(q-1),
	\qquad 1\leq i\leq t.
	$$
	Let $j=t-i+1$. Then it suffices to show that
	$$
	N_\alpha\mid \mu(q^j-1)(q-1),
	\qquad 1\leq j\leq t.
	$$
	By assumption,
	$$
	\frac{N_\alpha}{\gcd(N_\alpha,(q-1)^2)}\mid \mu.
	$$
	Therefore,
	$$
	N_\alpha\mid \mu(q-1)^2.
	$$
	For every $1\leq j\leq t$, we have
	$
	q^j-1=(q-1)(1+q+\cdots+q^{j-1}).
	$
	Hence
	$$
	N_\alpha\mid \mu(q^j-1)(q-1),
	\qquad 1\leq j\leq t.
	$$
	Consequently,
	$$
	\frac{V(\alpha_{t+1})}{V(\alpha_i)}
	\in\mathbb{F}_q^*,
	\qquad 1\leq i\leq t.
	$$
	By Theorem~\ref{thm-1}, the Goppa code $\Gamma(L,G)$ attains its designed
	distance, namely
	$
	d(\Gamma(L,G))=t+1.
	$
	This completes the proof.
\end{proof}

Theorem~\ref{thm-monomial-V-general} allows many choices of the irreducible
polynomial $H(x)$. Each such choice may give rise to different Goppa
polynomials and different Goppa codes, provided that the corresponding
remainder $G(x)$ has degree $t$ and
satisfies $G(\eta)\neq 0$ for all $\eta\in L$.
We now give a concrete family by taking $H(x)$ to be an irreducible binomial as in Lemma~\ref{lem-binomial-irreducible}.

\begin{remark}\label{rem-v-monomial}
	Let $a\in\mathbb{F}_q^*$ satisfy the conditions in
	Lemma~\ref{lem-binomial-irreducible} with $s=t+1$, and set
	$
	H(x)=x^{t+1}-a.
	$
	Then $H(x)$ is irreducible over $\mathbb{F}_q$. Suppose that the roots of
	$H(x)$ are $\alpha_1,\alpha_2,\ldots,\alpha_{t+1}$, and assume that
	$\alpha_1,\alpha_2,\ldots,\alpha_{t+1}\in L$.

	Let $\lambda\in\mathbb{F}_{q^m}^*$ and $\mu=\ell(t+1)$ for some integer
	$\ell\geq 1$. For $V(x)=\lambda x^\mu$, since
	$
	H'(x)=(t+1)x^t
	$
	and
	$
	x^{t+1}\equiv a\pmod{H(x)},
	$
	the unique remainder of $V(x)H'(x)$ modulo $H(x)$ is
	$$
	G(x)=\lambda(t+1)a^\ell x^t.
	$$
	By Lemma~\ref{lem-binomial-irreducible}, the characteristic of
	$\mathbb{F}_q$ does not divide $t+1$. Hence $t+1\neq0$ in $\mathbb{F}_q$.
	Since $\lambda\neq0$ and $a\neq0$, we have
	$
	\deg(G)=t.
	$
	
	Moreover, for every $1\leq i\leq t+1$, we have $\alpha_i^{t+1}=a$. Hence
	$$
	V(\alpha_i)
	=
	\lambda\alpha_i^\mu
	=
	\lambda(\alpha_i^{t+1})^\ell
	=
	\lambda a^\ell,\qquad 1\leq i \leq t+1.
	$$
	Therefore, 
	$$
	\frac{V(\alpha_{t+1})}{V(\alpha_i)}
	=
	1\in\mathbb{F}_q^*,
	\qquad 1\leq i\leq t.
	$$
	Provided that $G(\eta)\neq0$ for every $\eta\in L$, Theorem~\ref{thm-1}
	implies that
	$
	d\left(\Gamma(L,G)\right)=t+1.
	$
\end{remark}

\begin{example}
	We give several examples arising from Remark~\ref{rem-v-monomial}. Let
	$
	H(x)=x^{t+1}-a
	$
	be irreducible over $\mathbb F_q$ as in Lemma~\ref{lem-binomial-irreducible}.
	Let $\alpha$ be a root of $H(x)$, and
	write
	$
	\alpha_i=\alpha^{q^{i-1}}
	$
	for $1\leq i\leq t+1$. 
	For each row in Table~\ref{tab-monomial-vx-construction}, we choose
	$
	\mu=\ell(t+1)
	$, $V(x)=\lambda x^\mu$,
	and construct
	$
	G(x)=\lambda(t+1)a^\ell x^t \in \mathbb F_{q^m}[x].
	$
	The support $L$ is chosen such that it contains all roots of $H(x)$ and
	satisfies $G(\eta)\neq 0$ for all $\eta\in L$. The parameters of the corresponding Goppa codes
	$\Gamma(L,G)$ in Table \ref{tab-monomial-vx-construction} are computed by Magma, and $d\left(\Gamma(L,G)\right)=t+1$.
	In particular, we write
	$
	\mathbb F_4=\mathbb F_2(\theta)$ and $
	\theta^2+\theta+1=0
	$ for the row with $q=4$.
	
\begin{table}[htbp]
	\centering
	\caption{Monomial construction for $V(x)=\lambda x^\mu$ in Remark \ref{rem-v-monomial}}
	\label{tab-monomial-vx-construction}
	\renewcommand{\arraystretch}{1.2}
	\begin{tabular}{|c|c|c|c|c|c|c|c|c|}
		\hline
		$q$ & $t$ & $m$ & $H(x)=x^{t+1}-a$ & $\lambda$
		& $\ell$ & $G(x)$ & $n$ & $[n,k,d]$ \\
		\hline
		
		$3$ & $1$ & $2$ & $x^2-2$ & $\alpha$
		& $1$ & $\alpha x$ & $8$ & $[8,6,2]$ \\
		\hline
		
		$4$ & $2$ & $3$ & $x^3-(\theta+1)$ & $\alpha+1$
		& $2$ & $(\alpha+1)\theta x^2$ & $20$ & $[20,14,3]$ \\
		\hline
		
		$5$ & $3$ & $4$ & $x^4-2$ & $\alpha+2$
		& $1$ & $3(\alpha+2)x^3$ & $40$ & $[40,28,4]$ \\
		\hline
		
		\multirow{2}{*}{$7$}
	   & $2$ & $3$ & $x^3-3$ & $\alpha+2$
		& $2$ & $6(\alpha+2)x^2$ & $72$ & $[72,66,3]$ \\
		\cline{2-9}
		
		 & $5$ & $6$ & $x^6-5$ & $\alpha+2$
		& $2$ & $3(\alpha+2)x^5$ & $43$ & $[43,13,6]$ \\
		\hline
		
		$11$ & $4$ & $5$ & $x^5-2$ & $\alpha$
		& $2$ & $9\alpha x^4$ &$36$ & $[36,16,5]$ \\
		\hline
	
	\end{tabular}
\end{table}
\end{example}

We next consider the binomial
$
V(x)=\lambda_1x^\mu+\lambda_2.
$
By imposing suitable conditions on the exponent \(\mu\) and the coefficients
\(\lambda_1,\lambda_2\), we ensure that the values of \(V(x)\) at the conjugate
roots of \(H(x)\) satisfy \eqref{eq-v-condition} in Theorem \ref{thm-1} and $\deg(G) = t$.
Consequently, the corresponding Goppa code attains its designed
distance. This gives the following binomial construction.

\begin{theorem}\label{thm-binomial-V}
	Let $H(x)\in\mathbb{F}_q[x]$ be a monic irreducible polynomial of degree
	$t+1$ and $\alpha\in\mathbb{F}_{q^{t+1}}^*$ be a root of $H(x)$. Then
	$
	H(x)=\prod_{i=0}^{t}\left(x-\alpha^{q^i}\right).
	$
	Write $\alpha_i=\alpha^{q^{i-1}}$ for $1\leq i\leq t+1$, and assume that
	$\alpha_1,\alpha_2,\ldots,\alpha_{t+1}\in L$.
	Let
	$
	V(x)=\lambda_1x^\mu+\lambda_2,
	$
	where $\lambda_1,\lambda_2 \in\mathbb{F}_{q^m}^*$ and
	$\mu>0$. Let $G(x)$ be the unique remainder of $V(x)H'(x)$ modulo $H(x)$,
	which is given by \eqref{eq-Gx}. Assume that $G(\eta)\neq0$ for every $\eta\in L$. Set
	$
	N_\alpha=\operatorname{ord}(\alpha).
	$
	Suppose that
	$$
	\lambda_1
	\operatorname{Tr}_{\mathbb{F}_{q^{t+1}}/\mathbb{F}_q}(\alpha^\mu)
	+
	(t+1)\lambda_2
	\neq0,
	$$
	$$
	\frac{N_\alpha}{\gcd(N_\alpha,(q-1)^2)}\mid \mu,
	$$
	$$
	1+ \frac{\lambda_2}{\lambda_1\alpha^{\mu q^t}}\in\mathbb{F}_q^*,
	$$
	and 
	$$
	\lambda_2 \ne -\lambda_1\alpha^{\mu q^{i-1}} ,\qquad 1\le i \le t.
	$$
	 Then
	$
	d\left(\Gamma(L,G)\right)=t+1.
	$
\end{theorem}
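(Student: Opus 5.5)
The plan is to follow the template of the proof of Theorem~\ref{thm-monomial-V-general}. Since $G(x)$ is the remainder of $V(x)H'(x)$ modulo $H(x)$, it has the form \eqref{eq-Gx}. Hence $G(\alpha_i)=V(\alpha_i)H'(\alpha_i)$ for every root $\alpha_i$, and the Lagrange interpolation formula gives
$$
G(x)=\sum_{i=1}^{t+1}V(\alpha_i)\frac{H(x)}{x-\alpha_i}.
$$
Each $H(x)/(x-\alpha_i)$ is monic of degree $t$, so the coefficient of $x^t$ in $G(x)$ is $\sum_{i=1}^{t+1}V(\alpha_i)$. With $\alpha_i=\alpha^{q^{i-1}}$ this sum is
$$
\lambda_1\sum_{i=0}^{t}(\alpha^\mu)^{q^i}+(t+1)\lambda_2=\lambda_1\operatorname{Tr}_{\mathbb{F}_{q^{t+1}}/\mathbb{F}_q}(\alpha^\mu)+(t+1)\lambda_2,
$$
which is nonzero by the first hypothesis. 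Therefore $\deg(G)=t$, and Theorem~\ref{thm-1} is applicable.

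Next I verify \eqref{eq-v-condition}. First, $V(\alpha_i)=\lambda_1\alpha^{\mu q^{i-1}}+\lambda_2\neq0$ for $1\le i\le t$ by the last hypothesis. For $i=t+1$, the third hypothesis gives $V(\alpha_{t+1})=\lambda_1\alpha^{\mu q^t}\bigl(1+\lambda_2/(\lambda_1\alpha^{\mu q^t})\bigr)\neq 0$. So all the ratios in \eqref{eq-v-condition} are defined. For $1\le i\le t$, factor
$$
\frac{V(\alpha_{t+1})}{V(\alpha_i)}=\frac{\lambda_1\alpha^{\mu q^t}+\lambda_2}{\lambda_1\alpha^{\mu q^{i-1}}+\lambda_2}.
$$
The divisibility hypothesis $N_\alpha/\gcd(N_\alpha,(q-1)^2)\mid\mu$ gives, exactly as in the proof of Theorem~\ref{thm-monomial-V-general}, that $\alpha^{\mu(q^t-q^{i-1})}\in\mathbb{F}_q^*$ for all $1\le i\le t$. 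Call this element $c_i$, so that $\alpha^{\mu q^{i-1}}=c_i^{-1}\alpha^{\mu q^t}$.

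The main obstacle is that $V$ is not multiplicative, so knowing that $\alpha^{\mu q^t}/\alpha^{\mu q^{i-1}}\in\mathbb{F}_q^*$ does not immediately control the ratio of the values of $V$, because of the additive constant $\lambda_2$. The idea is to exploit the divisibility condition more strongly. Since $N_\alpha\mid \mu(q-1)^2$, the element $\beta=\alpha^{\mu(q-1)}$ satisfies $\beta^{q-1}=1$, so $\beta\in\mathbb{F}_q^*$. Then $\alpha^{\mu q}=\beta\,\alpha^\mu$, and by induction $\alpha^{\mu q^{k}}=\beta^{k}\alpha^\mu$, using that $\beta$ is fixed by Frobenius. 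Hence every value has the form $V(\alpha_i)=\lambda_1\beta^{i-1}\alpha^\mu+\lambda_2$, and I would then write
$$
V(\alpha_{t+1})=\lambda_1\alpha^{\mu q^t}\Bigl(1+\frac{\lambda_2}{\lambda_1\alpha^{\mu q^t}}\Bigr)=\lambda_1\alpha^{\mu q^t}\,s,\qquad s\in\mathbb{F}_q^*,
$$
by the third hypothesis. The task then reduces to showing that $V(\alpha_i)\in\mathbb{F}_q^*\cdot\lambda_1\alpha^{\mu q^t}$, i.e., that $\lambda_2/(\lambda_1\alpha^{\mu q^t})$ lying in $\mathbb{F}_q$ forces $\lambda_2/(\lambda_1\alpha^{\mu q^{i-1}})=c_i\cdot\lambda_2/(\lambda_1\alpha^{\mu q^t})$ to lie in $\mathbb{F}_q$ as well. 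This holds because $c_i\in\mathbb{F}_q^*$. Then $V(\alpha_i)=\lambda_1\alpha^{\mu q^{i-1}}\bigl(1+c_i r\bigr)$ with $r=\lambda_2/(\lambda_1\alpha^{\mu q^t})\in\mathbb{F}_q$, and $1+c_ir\neq0$ by the last hypothesis. So
$$
\frac{V(\alpha_{t+1})}{V(\alpha_i)}=c_i\,\frac{1+r}{1+c_ir}\in\mathbb{F}_q^*.
$$
Theorem~\ref{thm-1} then yields $d(\Gamma(L,G))=t+1$.
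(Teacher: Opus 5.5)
Your proof is correct and follows the paper's argument: you get $\deg(G)=t$ from the $x^t$-coefficient $\sum_i V(\alpha_i)$, and you then combine $c_i=\alpha^{\mu(q^t-q^{i-1})}\in\mathbb{F}_q^*$ with $r=\lambda_2/(\lambda_1\alpha^{\mu q^t})\in\mathbb{F}_q$ to get the ratio $c_i(1+r)/(1+c_ir)$, which is the paper's quotient $(1+r)/(c_i^{-1}+r)$ rescaled. The detour through $\beta=\alpha^{\mu(q-1)}$ is correct but unnecessary, and your remark that $1+r\in\mathbb{F}_q^*$ forces $r\in\mathbb{F}_q$ makes explicit a step the paper leaves implicit.
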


\begin{proof}
	Note that
	$
	G(x)=U(x)H(x)+V(x)H'(x).
	$
	Then, for each root $\alpha_i$ of $H(x)$, we have
	$$
	G(\alpha_i)=V(\alpha_i)H'(\alpha_i),
	\qquad 1\leq i\leq t+1.
	$$
	
	We first prove that $\deg(G)=t$. Since $\deg(G)\leq t$ and
	$G(\alpha_i)=V(\alpha_i)H'(\alpha_i)$ for $1\leq i\leq t+1$, the Lagrange
	interpolation formula gives
	$$
	G(x)
	=
	\sum_{i=1}^{t+1}
	G(\alpha_i)
	\frac{H(x)}{(x-\alpha_i)H'(\alpha_i)}.
	$$
	Substituting $G(\alpha_i)=V(\alpha_i)H'(\alpha_i)$ into the above identity,
	we obtain
	$$
	G(x)
	=
	\sum_{i=1}^{t+1}
	V(\alpha_i)\frac{H(x)}{x-\alpha_i}.
	$$
	Each polynomial $H(x)/(x-\alpha_i)$ is monic of degree $t$. Hence the
	coefficient of $x^t$ in $G(x)$ is
	$
	\sum_{i=1}^{t+1}V(\alpha_i).
	$
	Since $V(x)=\lambda_1x^\mu+\lambda_2$, we have
	$$
	\sum_{i=1}^{t+1}V(\alpha_i)
	=
	\lambda_1\sum_{i=1}^{t+1}\alpha_i^\mu
	+
	(t+1)\lambda_2.
	$$
	Note that $\alpha_i=\alpha^{q^{i-1}}$. Then we have
	$$
	\sum_{i=1}^{t+1}\alpha_i^\mu
	=
	\sum_{i=0}^{t}(\alpha^\mu)^{q^i}
	=
	\operatorname{Tr}_{\mathbb{F}_{q^{t+1}}/\mathbb{F}_q}(\alpha^\mu).
	$$
	Therefore, the coefficient of $x^t$ in $G(x)$ is
	$
	\lambda_1
	\operatorname{Tr}_{\mathbb{F}_{q^{t+1}}/\mathbb{F}_q}(\alpha^\mu)
	+
	(t+1)\lambda_2.
	$
	By assumption, this coefficient is nonzero and
	$
	\deg(G)=t.
	$
	
	It remains to verify \eqref{eq-v-condition} in Theorem~\ref{thm-1}.
	For $1\leq i\leq t$, we have
	$$
	\frac{V(\alpha_{t+1})}{V(\alpha_i)}
	=
	\frac{\lambda_1\alpha^{\mu q^t}+\lambda_2}
	{\lambda_1\alpha^{\mu q^{i-1}}+\lambda_2}.
	$$
	Dividing both the numerator and the denominator by
	$\lambda_1\alpha^{\mu q^t}$ gives
	$$
	\frac{V(\alpha_{t+1})}{V(\alpha_i)}
	=
	\frac{
		1+\frac{\lambda_2}{\lambda_1\alpha^{\mu q^t}}
	}{
		\alpha^{\mu(q^{i-1}-q^t)}
		+
		\frac{\lambda_2}{\lambda_1\alpha^{\mu q^t}}
	}.
	$$
	We then prove that
	$$
	\alpha^{\mu(q^{i-1}-q^t)}\in\mathbb{F}_q^*,
	\qquad 1\leq i\leq t.
	$$
	Since $N_\alpha=\operatorname{ord}(\alpha)$, we have
	$
	\alpha^k\in\mathbb{F}_q^*
	$
	if and only if
	$
	(\alpha^k)^q=\alpha^k
	$ for any integer $k$,
	which is equivalent to
	$
	N_\alpha\mid k(q-1).
	$
	Thus, it is enough to prove that
	$$
	N_\alpha\mid \mu(q^{i-1}-q^t)(q-1),
	\qquad 1\leq i\leq t.
	$$
	Since $\gcd(N_\alpha,q)=1$, this is equivalent to
	$$
	N_\alpha\mid \mu(q^{t-i+1}-1)(q-1),
	\qquad 1\leq i\leq t.
	$$
	Let $j=t-i+1$. Then it suffices to show that
	$$
	N_\alpha\mid \mu(q^j-1)(q-1),
	\qquad 1\leq j\leq t.
	$$
	By assumption,
	$$
	\frac{N_\alpha}{\gcd(N_\alpha,(q-1)^2)}\mid \mu.
	$$
	Therefore, it is easy to know that
	$
	N_\alpha\mid \mu(q-1)^2.
	$
	For every $1\leq j\leq t$, we have
	$
	q^j-1=(q-1)(1+q+\cdots+q^{j-1}).
	$
	Hence,
	$$
	N_\alpha\mid \mu(q^j-1)(q-1),
	\qquad 1\leq j\leq t.
	$$
	Consequently,
	$$
	\alpha^{\mu(q^{i-1}-q^t)}\in\mathbb{F}_q^*,
	\qquad 1\leq i\leq t.
	$$
	By assumption,
	$$
	1+ \frac{\lambda_2}{\lambda_1\alpha^{\mu q^t}}\in\mathbb{F}_q^*
	$$
		and 
	$$
	\lambda_2 \ne -\lambda_1\alpha^{\mu q^{i-1}},\qquad 1\le i \le t.
	$$
	Therefore, we have
	$$
	\frac{V(\alpha_{t+1})}{V(\alpha_i)}
	\in\mathbb{F}_q^*,
	\qquad 1\leq i\leq t.
	$$
	By Theorem~\ref{thm-1}, the Goppa code $\Gamma(L,G)$ attains its designed
	distance, namely
	$
	d(\Gamma(L,G))=t+1.
	$
	This completes the proof.
\end{proof}

We next specialize Theorem~\ref{thm-binomial-V} to the case where
$H(x)=x^{t+1}-a$ is an irreducible binomial of the form considered in
Lemma~\ref{lem-binomial-irreducible}.

\begin{remark}\label{rem-binomial-V}
	Let $a\in\mathbb{F}_q^*$ satisfy the conditions in
	Lemma~\ref{lem-binomial-irreducible} with $s=t+1$, and set
	$
	H(x)=x^{t+1}-a.
	$
	Then $H(x)$ is irreducible over $\mathbb{F}_q$. Let $\alpha$ be a root of
	$H(x)$ and write
	$
	\alpha_i=\alpha^{q^{i-1}}
	$ for $ 1\leq i\leq t+1$.
	Assume that
	$
	\alpha_1,\alpha_2,\ldots,\alpha_{t+1}\in L.
	$
	
	Let
	$
	V(x)=\lambda_1x^\mu+\lambda_2,
	$
	where $\lambda_1,\lambda_2 \in\mathbb{F}_{q^m}^*$. Suppose that
	$
	\mu-1=\ell_1(t+1)+\ell_2
	$
	for some integers $\ell_1\geq 0$ and $0\leq \ell_2\leq t$ with
	$(\ell_1,\ell_2)\neq(0,0)$.
	Since
	$
	H'(x)=(t+1)x^t
	$
	and
	$
	x^{t+1}\equiv a\pmod{H(x)},
	$
	we have
	$$
	G(x)
	=
	\lambda_2 (t+1)x^t
	+
	\lambda_1(t+1)a^{\ell_1+1}x^{\ell_2}.
	$$
	By Lemma~\ref{lem-binomial-irreducible}, the characteristic of
	$\mathbb F_q$ does not divide $t+1$. Hence $t+1\neq0$ in $\mathbb F_q$.
	If $0\leq\ell_2<t$, then
	$
	\deg(G)=t
	$.
	If $\ell_2=t$, then
	$$
	G(x)
	=
	(t+1)(\lambda_2+\lambda_1a^{\ell_1+1})x^t,
	$$
	and 
	$
	\deg(G)=t
	$
	under the condition that
	$
	\lambda_2+\lambda_1a^{\ell_1+1}\neq0.
	$
	Set
	$
	N_\alpha=\operatorname{ord}(\alpha).
	$
	Assume further that
	$$
	\frac{N_\alpha}{\gcd\left(N_\alpha,(q-1)^2\right)}\mid \mu,
	$$
	$$
	1+\frac{\lambda_2}{\lambda_1\alpha^{\mu q^t}}\in\mathbb F_q^*,
	$$
	and
	$$
	\lambda_2\neq -\lambda_1\alpha^{\mu q^{i-1}},
	\qquad 1\leq i\leq t.
	$$
	If $G(\eta)\neq0$ for every $\eta\in L$, then
	Theorem~\ref{thm-binomial-V} implies that
	$
	d(\Gamma(L,G))=t+1.
	$
\end{remark}

\begin{example}
	We give several examples arising from Remark~\ref{rem-binomial-V}.
	Let
	$
	H(x)=x^{t+1}-a
	$
	be irreducible over $\mathbb F_q$ as in
	Lemma~\ref{lem-binomial-irreducible}. Let $\alpha$ be a root of $H(x)$ and
	write
	$
	\alpha_i=\alpha^{q^{i-1}}
	$
	for $1\leq i\leq t+1$. 
	For each row in Table~\ref{tab-binomial-vx-construction}, we take
	$
	V(x)=\lambda_1x^\mu+\lambda_2
	$
	and 
	$
	\mu-1=\ell_1(t+1)+\ell_2
	$
	with $\ell_1\geq0$ and $0\leq\ell_2\leq t$. The corresponding Goppa
	polynomial is
	$$
	G(x)
	=
	\lambda_2(t+1)x^t
	+
	\lambda_1(t+1)a^{\ell_1+1}x^{\ell_2}.
	$$
	In this table, $c\in \Bbb F_q^*$ is chosen so that
	$
	\lambda_2=c\lambda_1\alpha^{\mu q^t}
	$.
	The support $L$ is chosen to contain all roots of $H(x)$ and to satisfy
	$
	G(\eta)\neq0
	$
	for every $\eta\in L$. The parameters of the corresponding Goppa codes
	$\Gamma(L,G)$ in
	Table~\ref{tab-binomial-vx-construction} are computed by Magma. In particular, we write
	$
	\mathbb F_4=\mathbb F_2(\theta)$ and $
	\theta^2+\theta+1=0
	$ for the row with $q=4$.

\begin{table}[htbp]
	\centering
	\caption{Binomial construction for $V(x)=\lambda_1x^{\mu}+\lambda_2$ in Remark~\ref{rem-binomial-V}}
	\label{tab-binomial-vx-construction}
	\renewcommand{\arraystretch}{1.2}
	\begin{tabular}{|c|c|c|c|c|c|c|c|c|c|c|c|}
		\hline
		$q$ & $t$ & $m$ & $H(x)=x^{t+1}-a$
		& $\mu$ & $(\ell_1,\ell_2)$ & $c$
		& $\lambda_1$ & $\lambda_2$ & $G(x)$ & $n$ & $[n,k,d]$ \\
		\hline
		
		$4$ & $2$ & $3$ & $x^3-(\theta+1)$
		& $3$ & $(0,2)$ & $\theta$
		& $1$ & $1$ & $\theta x^2$
		& $20$ & $[20,14,3]$ \\
		\hline
		
		\multirow{2}{*}{$5$}
		 & $1$ & $2$ & $x^2-2$
		& $1$ & $(0,0)$ & $2$
		& $1$ & $3\alpha$ & $\alpha x+4$
		& $15$ & $[15,13,2]$ 
		\\
		\cline{2-12}

		 & $3$ & $4$ & $x^4-2$
		& $2$ & $(0,1)$ & $2$
		& $1$ & $3\alpha^2$ & $2\alpha^2x^3+3x$
		& $100$ & $[100,88,4]$ \\
		\hline

		\multirow{3}{*}{$7$}
		 & $2$ & $3$ & $x^3-3$
		& $2$ & $(0,1)$ & $1$
		& $1$ & $2\alpha^2$ & $6\alpha^2x^2+2x$
		& $48$ & $[48,42,3]$ \\
		\cline{2-12}
		
		 & $2$ & $3$ & $x^3-3$
		& $4$ & $(1,0)$ & $2$
		& $1$ & $5\alpha$ & $\alpha x^2+6$
		& $52$ & $[52,46,3]$ \\
		\cline{2-12}
		
	    & $5$ & $6$ & $x^6-5$
		& $2$ & $(0,1)$ & $1$
		& $1$ & $2\alpha^2$ & $5\alpha^2x^5+2x$
		& $72$ & $[72,42,6]$ \\
		\hline
		
			\multirow{2}{*}{$11$}
		 & $1$ & $2$ & $x^2-6$
		& $3$ & $(1,0)$ & $2$
		& $1$ & $10\alpha$ & $9\alpha x+6$
		& $42$ & $[42,40,2]$ \\
		\cline{2-12}
		
	 & $4$ & $5$ & $x^5-6$
		& $8$ & $(1,2)$ & $1$
		& $1$ & $10\alpha^3$ & $6\alpha^3x^4+4x^2$
		& $63$ & $[63,43,5]$ \\
		\hline
		
			\multirow{2}{*}{$13$}
		& $2$ & $3$ & $x^3-6$
		& $5$ & $(1,1)$ & $1$
		& $1$ & $2\alpha^2$ & $6\alpha^2x^2+4x$
		& $47$ & $[47,41,3]$ \\

		\cline{2-12}
		
		 & $3$ & $4$ & $x^4-6$
		& $2$ & $(0,1)$ & $2$
		& $1$ & $11\alpha^2$ & $5\alpha^2x^3+11x$
		& $79$ & $[79,67,4]$ \\
		\hline
	\end{tabular}
\end{table}
\end{example}

Motivated by the constructions of $V(x)$ in
Theorems~\ref{thm-monomial-V-general} and~\ref{thm-binomial-V},
we extend these results by considering products of the above
polynomials. The corresponding construction is given below.

\begin{theorem}\label{thm-product-monomial-binomial-V}
	Let $H(x)\in\mathbb{F}_q[x]$ be a monic irreducible polynomial of degree
	$t+1$ and $\alpha\in\mathbb{F}_{q^{t+1}}^*$ be a root of $H(x)$. Then
	$
	H(x)=\prod_{i=0}^{t}\left(x-\alpha^{q^i}\right).
	$
	Write $\alpha_i=\alpha^{q^{i-1}}$ for $1\leq i\leq t+1$, and assume that
	$\alpha_1,\alpha_2,\ldots,\alpha_{t+1}\in L$.
	Let
	$
	V_1(x)=\lambda_1x^{\mu_1}
	$
	and
	$
	V_2(x)=\lambda_2x^{\mu_2}+\lambda_3,
	$
	where
	$
	\lambda_1,\lambda_2,\lambda_3\in\mathbb{F}_{q^m}^*
	$
	and
	$
	\mu_1,\mu_2>0.
	$
	Set
	$
	V(x)=V_1(x)V_2(x)
	=
	\lambda_1x^{\mu_1}\left(\lambda_2x^{\mu_2}+\lambda_3\right).
	$
	Let $G(x)$ be the unique remainder of $V(x)H'(x)$ modulo $H(x)$,
	which is given by \eqref{eq-Gx}. Assume that $G(\eta)\neq0$ for every
	$\eta\in L$. Set
	$
	N_\alpha=\operatorname{ord}(\alpha).
	$
	Suppose that
	$$
	\lambda_1\lambda_2
	\operatorname{Tr}_{\mathbb{F}_{q^{t+1}}/\mathbb{F}_q}
	(\alpha^{\mu_1+\mu_2})
	+
	\lambda_1\lambda_3
	\operatorname{Tr}_{\mathbb{F}_{q^{t+1}}/\mathbb{F}_q}
	(\alpha^{\mu_1})
	\neq0,
	$$
	$$
	\frac{N_\alpha}{\gcd(N_\alpha,(q-1)^2)}\mid \mu_1,
	$$
	$$
	\frac{N_\alpha}{\gcd(N_\alpha,(q-1)^2)}\mid \mu_2,
	$$
	$$
	1+\frac{\lambda_3}{\lambda_2\alpha^{\mu_2q^t}}\in\mathbb F_q^*,
	$$
	and
	$$
	\lambda_3\neq -\lambda_2\alpha^{\mu_2q^{i-1}},
	\qquad 1\leq i\leq t.
	$$
	Then
	$
	d(\Gamma(L,G))=t+1.
	$
\end{theorem}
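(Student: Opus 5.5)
The plan follows the template of Theorems~\ref{thm-monomial-V-general} and~\ref{thm-binomial-V}: establish $\deg(G)=t$, then verify condition \eqref{eq-v-condition}, and conclude with Theorem~\ref{thm-1}.

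\emph{Degree of $G$.} Since $G(x)=U(x)H(x)+V(x)H'(x)$ with $\deg(G)\le t$, we have $G(\alpha_i)=V(\alpha_i)H'(\alpha_i)$ for $1\le i\le t+1$. Lagrange interpolation at the roots of $H(x)$, exactly as in the proof of Theorem~\ref{thm-2}, gives
\[
G(x)=\sum_{i=1}^{t+1}V(\alpha_i)\frac{H(x)}{x-\alpha_i}.
\]
Each $H(x)/(x-\alpha_i)$ is monic of degree $t$, so the coefficient of $x^t$ in $G(x)$ is $\sum_{i}V(\alpha_i)$. Expanding $V(x)=\lambda_1\lambda_2x^{\mu_1+\mu_2}+\lambda_1\lambda_3x^{\mu_1}$ and using $\alpha_i=\alpha^{q^{i-1}}$, this sum equals
\[
\lambda_1\lambda_2\operatorname{Tr}_{\mathbb{F}_{q^{t+1}}/\mathbb{F}_q}(\alpha^{\mu_1+\mu_2})+\lambda_1\lambda_3\operatorname{Tr}_{\mathbb{F}_{q^{t+1}}/\mathbb{F}_q}(\alpha^{\mu_1}),
\]
which is nonzero by the first hypothesis. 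Hence $\deg(G)=t$.

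\emph{Condition \eqref{eq-v-condition}.} For $1\le i\le t$, the ratio factors as
\[
\frac{V(\alpha_{t+1})}{V(\alpha_i)}=\frac{V_1(\alpha_{t+1})}{V_1(\alpha_i)}\cdot\frac{V_2(\alpha_{t+1})}{V_2(\alpha_i)}.
\]
The first factor is $\alpha^{\mu_1(q^t-q^{i-1})}$. The second is the binomial ratio already treated in Theorem~\ref{thm-binomial-V}, with $(\lambda_1,\lambda_2,\mu)$ replaced by $(\lambda_2,\lambda_3,\mu_2)$. Since $\mathbb F_q^*$ is a group, it suffices to show each factor lies in $\mathbb F_q^*$.

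For the first factor, recall that $\alpha^k\in\mathbb F_q^*$ if and only if $N_\alpha\mid k(q-1)$. The hypothesis $N_\alpha/\gcd(N_\alpha,(q-1)^2)\mid\mu_1$ gives $N_\alpha\mid\mu_1(q-1)^2$. Since $q^t-q^{i-1}=q^{i-1}(q^{t-i+1}-1)$, $\gcd(N_\alpha,q)=1$, and $(q-1)\mid(q^j-1)$, we get $N_\alpha\mid\mu_1(q^t-q^{i-1})(q-1)$. This is verbatim the argument in Theorem~\ref{thm-monomial-V-general}.

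For the second factor, the same divisibility argument with $\mu_2$ shows $\gamma_i:=\alpha^{\mu_2(q^{i-1}-q^t)}\in\mathbb F_q^*$. Dividing numerator and denominator by $\lambda_2\alpha^{\mu_2q^t}$ rewrites the ratio as
\[
\frac{1+\kappa}{\gamma_i+\kappa},\qquad \kappa=\frac{\lambda_3}{\lambda_2\alpha^{\mu_2q^t}}.
\]
By hypothesis $1+\kappa\in\mathbb F_q^*$, and the denominator is nonzero because $\lambda_3\neq-\lambda_2\alpha^{\mu_2q^{i-1}}$. Theorem~\ref{thm-1} then yields $d(\Gamma(L,G))=t+1$.

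\emph{Main obstacle.} The delicate point is showing the denominator $\gamma_i+\kappa$ lies in $\mathbb F_q$, so that the quotient is in $\mathbb F_q^*$. This holds because $\kappa=(1+\kappa)-1\in\mathbb F_q$ and $\gamma_i\in\mathbb F_q^*$, so $\gamma_i+\kappa\in\mathbb F_q$, and it is nonzero by the last hypothesis. It is also worth noting that $V(\alpha_i)\neq0$ is automatic here: $V_1(\alpha_i)\neq0$ since $\alpha\neq0$, and $V_2(\alpha_i)\neq0$ by the hypotheses, including $i=t+1$ via $1+\kappa\neq0$. This is consistent with $G(\eta)\neq0$ on $L$.
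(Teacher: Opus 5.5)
Your proposal is correct and follows the paper's proof step for step. You use Lagrange interpolation to identify the leading coefficient of $G(x)$ with the stated trace expression, then factor $V(\alpha_{t+1})/V(\alpha_i)$ into the monomial ratio and the binomial ratio and invoke Theorem~\ref{thm-1}. Your explicit remark that $\kappa=(1+\kappa)-1\in\mathbb F_q$, so that $\gamma_i+\kappa\in\mathbb F_q^*$, also spells out a step the paper leaves implicit in the proof of Theorem~\ref{thm-binomial-V}.
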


\begin{proof}
	Let
	$
	V_1(x)=\lambda_1x^{\mu_1}
	$, 
	$
	V_2(x)=\lambda_2x^{\mu_2}+\lambda_3
	$
	, and
	$
	V(x)=V_1(x)V_2(x) =
	\lambda_1\lambda_2x^{\mu_1+\mu_2}
	+
	\lambda_1\lambda_3x^{\mu_1}
	$.
	For each root
	$
	\alpha_i
	$
	of
	$
	H(x)
	$,
	we have
	$$
	G(\alpha_i)=V(\alpha_i)H'(\alpha_i),
	\qquad 1\leq i\leq t+1.
	$$
	
	We first prove that $\deg(G)=t$. Since $\deg(G)\leq t$, the Lagrange
	interpolation formula gives
	$$
	G(x)
	=
	\sum_{i=1}^{t+1}
	V(\alpha_i)\frac{H(x)}{x-\alpha_i}.
	$$
	Each polynomial $H(x)/(x-\alpha_i)$ is monic of degree $t$. Hence the
	coefficient of $x^t$ in $G(x)$ is
	$
	\sum_{i=1}^{t+1}V(\alpha_i).
	$
	It is easy to know that
	$$
	\sum_{i=1}^{t+1}V(\alpha_i)
	=
	\lambda_1\lambda_2
	\sum_{i=1}^{t+1}\alpha_i^{\mu_1+\mu_2}
	+
	\lambda_1\lambda_3
	\sum_{i=1}^{t+1}\alpha_i^{\mu_1}.
	$$
	Since $\alpha_i=\alpha^{q^{i-1}}$, we have
	$$
	\sum_{i=1}^{t+1}\alpha_i^{\mu_1+\mu_2}
	=
	\operatorname{Tr}_{\mathbb{F}_{q^{t+1}}/\mathbb{F}_q}
	(\alpha^{\mu_1+\mu_2})
	$$
	and
	$$
	\sum_{i=1}^{t+1}\alpha_i^{\mu_1}
	=
	\operatorname{Tr}_{\mathbb{F}_{q^{t+1}}/\mathbb{F}_q}
	(\alpha^{\mu_1}).
	$$
	Then the coefficient of $x^t$ in $G(x)$ is
	$$
	\lambda_1\lambda_2
	\operatorname{Tr}_{\mathbb{F}_{q^{t+1}}/\mathbb{F}_q}
	(\alpha^{\mu_1+\mu_2})
	+
	\lambda_1\lambda_3
	\operatorname{Tr}_{\mathbb{F}_{q^{t+1}}/\mathbb{F}_q}
	(\alpha^{\mu_1}).
	$$
	By assumption, this coefficient is nonzero and
	$
	\deg(G)=t.
	$
	
	It remains to verify \eqref{eq-v-condition} in Theorem~\ref{thm-1}. For
	$1\leq i\leq t$, we have
	$$
	\frac{V(\alpha_{t+1})}{V(\alpha_i)}
	=
	\frac{V_1(\alpha_{t+1})}{V_1(\alpha_i)}
		\frac{V_2(\alpha_{t+1})}{V_2(\alpha_i)}.
	$$
	By the same argument as in Theorem~\ref{thm-monomial-V-general}, the
	condition
	$$
	\frac{N_\alpha}{\gcd(N_\alpha,(q-1)^2)}\mid \mu_1
	$$
	implies that
	$$
	\frac{V_1(\alpha_{t+1})}{V_1(\alpha_i)}
	=
	\left(\frac{\alpha_{t+1}}{\alpha_i}\right)^{\mu_1}
	\in\mathbb F_q^*,
	\qquad 1\leq i\leq t.
	$$
	Next, applying the ratio argument in Theorem~\ref{thm-binomial-V} to
	$
	V_2(x)=\lambda_2x^{\mu_2}+\lambda_3
	$,
	the assumptions
	$$
	\frac{N_\alpha}{\gcd(N_\alpha,(q-1)^2)}\mid \mu_2,
	$$
	$$
	1+\frac{\lambda_3}{\lambda_2\alpha^{\mu_2q^t}}\in\mathbb F_q^*,
	$$
	and
	$$
	\lambda_3\neq -\lambda_2\alpha^{\mu_2q^{i-1}},
	\qquad 1\leq i\leq t,
	$$
	give that
	$$
	\frac{V_2(\alpha_{t+1})}{V_2(\alpha_i)}
	\in\mathbb F_q^*,
	\qquad 1\leq i\leq t.
	$$
	Therefore, we have
	$$
	\frac{V(\alpha_{t+1})}{V(\alpha_i)}
	\in\mathbb F_q^*,
	\qquad 1\leq i\leq t.
	$$
	Since $\deg(G)=t$ and $G(\eta)\neq0$ for every $\eta\in L$,
	Theorem~\ref{thm-1} implies that
	$$
	d(\Gamma(L,G))=t+1.
	$$
	This completes the proof.
\end{proof}

We next apply Theorem~\ref{thm-product-monomial-binomial-V} to the case where $H(x)=x^{t+1}-a$ is an irreducible binomial under the conditions
given in Lemma~\ref{lem-binomial-irreducible}.

\begin{remark}\label{rem-product-V}
	Let $a\in\mathbb{F}_q^*$ satisfy the conditions in
	Lemma~\ref{lem-binomial-irreducible} with $s=t+1$, and set
	$
	H(x)=x^{t+1}-a.
	$
	Then $H(x)$ is irreducible over $\mathbb{F}_q$. Let $\alpha$ be a root of
	$H(x)$ and write
	$
	\alpha_i=\alpha^{q^{i-1}}
	$
	for $1\leq i\leq t+1$. Assume that
	$
	\alpha_1,\alpha_2,\ldots,\alpha_{t+1}\in L.
	$
	
	Let
	$
	V_1(x)=\lambda_1x^{\mu_1}
	$
	and
	$
	V_2(x)=\lambda_2x^{\mu_2}+\lambda_3,
	$
	where
	$
	\lambda_1,\lambda_2,\lambda_3\in\mathbb{F}_{q^m}^*
	$
	and
	$
	\mu_1,\mu_2>0.
	$
	Set
	$
	V(x)=V_1(x)V_2(x).
	$
	Thus
	$
	V(x)
	=
	\lambda_1\lambda_2x^{\mu_1+\mu_2}
	+
	\lambda_1\lambda_3x^{\mu_1}.
	$
	Write
	$$
	\mu_1-1=\ell_1(t+1)+\ell_2,
	\qquad
	0\leq \ell_2\leq t,
	$$
	and
	$$
	\mu_1+\mu_2-1=\ell_3(t+1)+\ell_4,
	\qquad
	0\leq \ell_4\leq t,
	$$
	for some integers $\ell_1,\ell_3\geq0$. Since
	$
	H'(x)=(t+1)x^t
	$
	and
	$
	x^{t+1}\equiv a\pmod{H(x)},
	$
	we have
	\begin{align*}
		G(x)
		&\equiv V(x)H'(x)\pmod{H(x)}\\
		&=
		(t+1)\lambda_1\lambda_2a^{\ell_3+1}x^{\ell_4}
		+
		(t+1)\lambda_1\lambda_3a^{\ell_1+1}x^{\ell_2}.
	\end{align*}
	By Lemma~\ref{lem-binomial-irreducible}, the characteristic of
	$\mathbb{F}_q$ does not divide $t+1$. Hence $t+1\neq0$ in
	$\mathbb{F}_q$.
	If $\ell_4=t$ and $\ell_2<t$, then $\deg(G)=t$. If $\ell_2=t$ and
	$\ell_4<t$, then $\deg(G)=t$. If $\ell_2=\ell_4=t$, then
	$$
	G(x)
	=
	(t+1)\lambda_1
	\left(
	\lambda_2a^{\ell_3+1}
	+
	\lambda_3a^{\ell_1+1}
	\right)x^t,
	$$
	and hence $\deg(G)=t$ provided that
	$
	\lambda_2a^{\ell_3+1}
	+
	\lambda_3a^{\ell_1+1}
	\neq0.
	$
	
	Set
	$
	N_\alpha=\operatorname{ord}(\alpha).
	$
	Further assume that
	$$
	\frac{N_\alpha}{\gcd(N_\alpha,(q-1)^2)}\mid \mu_1 ,\quad
	\frac{N_\alpha}{\gcd(N_\alpha,(q-1)^2)}\mid \mu_2,
	$$
	$$
	1+\frac{\lambda_3}{\lambda_2\alpha^{\mu_2q^t}}\in\mathbb F_q^*,
	$$
	and
	$$
	\lambda_3\neq -\lambda_2\alpha^{\mu_2q^{i-1}},
	\qquad 1\leq i\leq t.
	$$
	If $G(\eta)\neq0$ for every $\eta\in L$, then
	Theorem~\ref{thm-product-monomial-binomial-V} implies that
	$
	d(\Gamma(L,G))=t+1.
	$
\end{remark}

\begin{example}\label{ex-product-V}
	We give several examples arising from Remark~\ref{rem-product-V}.
	Let
	$
	H(x)=x^{t+1}-a
	$
	be irreducible over $\mathbb F_q$ as in
	Lemma~\ref{lem-binomial-irreducible}. Let $\alpha$ be a root of $H(x)$, and
	write
	$
	\alpha_i=\alpha^{q^{i-1}}
	$
	for $1\leq i\leq t+1$. 
	For each row in Table~\ref{tab-product-v}, we take
	$
	V_1(x)=\lambda_1x^{\mu_1}
	$
	and
	$
	V_2(x)=\lambda_2x^{\mu_2}+\lambda_3,
	$
	where
	$
	\lambda_1,\lambda_2,\lambda_3\in\mathbb{F}_{q^m}^*
	$
	and
	$
	\mu_1,\mu_2>0.
	$
	The corresponding Goppa polynomial is
	$$
	G(x)
	=
	(t+1)\lambda_1\lambda_2a^{\ell_3+1}x^{\ell_4}
	+
	(t+1)\lambda_1\lambda_3a^{\ell_1+1}x^{\ell_2},
	$$ where $\max\{\ell_2,\ell_4\}=t$.
		The support $L$ is chosen to contain all roots of $H(x)$ and to satisfy
	$
	G(\eta)\neq0
	$
	for every $\eta\in L$. The parameters of the corresponding Goppa codes
	$\Gamma(L,G)$ are computed by Magma and are listed in
	Table~\ref{tab-product-v}.

\begin{table}[htbp]
	\centering
	\caption{Examples of the product construction in Remark~\ref{rem-product-V}}
	\label{tab-product-v}
	\renewcommand{\arraystretch}{1.2}
	\scriptsize
	\begin{tabular}{|c|c|c|c|c|c|c|c|c|c|c|c|c|}
		\hline
		$q$ & $t$ & $m$ & $H(x)$ & $(\mu_1,\mu_2)$
		& $\lambda_1$ & $\lambda_2$ & $\lambda_3$
		& $(\ell_1,\ell_2)$ & $(\ell_3,\ell_4)$
		& $G(x)$ & $n$ & $[n,k,d]$ \\
		\hline
		
		\multirow{2}{*}{$5$}
		& $1$ & $2$ & $x^2-2$
		& $(2,3)$
		& $2\alpha$ & $3$ & $3\alpha$
		& $(0,1)$ & $(2,0)$
		& $3x+\alpha$
		& $20$ & $[20,18,2]$ \\
		\cline{2-13}

		 & $3$ & $4$ & $x^4-2$
		& $(4,2)$
		& $2\alpha$ & $3\alpha^2$ & $3$
		& $(0,3)$ & $(1,1)$
		& $3\alpha x^3+\alpha^3x$
		& $123$ & $[123,111,4]$ \\
		\hline
		
		\multirow{2}{*}{$7$}
		 & $2$ & $3$ & $x^3-3$
		& $(2,4)$
		& $3\alpha$ & $2\alpha^2$ & $4$
		& $(0,1)$ & $(1,2)$
		& $3x^2+3\alpha x$
		& $43$ & $[43,37,3]$ \\
		\cline{2-13}

		 & $5$ & $6$ & $x^6-3$
		& $(2,4)$
		& $5\alpha^2$ & $4\alpha$ & $2\alpha^5$
		& $(0,1)$ & $(0,5)$
		& $3\alpha^3x^5+\alpha x$
		& $72$ & $[72,42,6]$ \\
		\hline
		
		\multirow{1}{*}{$11$}

		 & $4$ & $5$ & $x^5-6$
		& $(5,8)$
		& $4\alpha$ & $7\alpha^2$ & $2$
		& $(0,4)$ & $(2,2)$
		& $9\alpha x^4+\alpha^3x^2$
		& $34$ & $[34,14,5]$ \\
		\hline
		
		\multirow{1}{*}{$13$}
		
				 & $2$ & $3$ & $x^3-6$
		& $(5,1)$
		& $3\alpha$ & $5\alpha^2$ & $11$
		& $(1,1)$ & $(1,2)$
		& $9x^2+2\alpha x$
		& $93$ & $[93,87,3]$ \\
		\hline
	\end{tabular}
\end{table}
\end{example}

The following remark gives a general class of polynomials $V(x)$ for which
\eqref{eq-v-condition} in Theorem~\ref{thm-1} is automatically satisfied.
Motivated by the fact that, when $q=2$, the ratio in \eqref{eq-v-condition}
necessarily equals
$
V(\alpha_{t+1})/V(\alpha_i)=1
$
for any $1\leq i\leq t$.

\begin{remark}
	Let $\alpha$ be a primitive element of $\mathbb{F}_{q^{t+1}}$, and let
	$
	H(x)=\prod_{i=0}^{t}\left(x-\alpha^{q^i}\right)\in\mathbb{F}_q[x].
	$
	Write $\alpha_i=\alpha^{q^{i-1}}$ for $1\leq i\leq t+1$, and assume that
	$\alpha_1,\alpha_2,\ldots,\alpha_{t+1}\in L$.
	Let
	$
	V(x)=\lambda x^\mu+x^\nu,
	$
	where $\lambda\in\mathbb{F}_q^*$ and $\mu,\nu>0$. Suppose that
	$
	\lambda\alpha^\mu+\alpha^\nu \in\mathbb{F}_q^*.
	$
	Then, for every $1\leq i\leq t+1$, we have
	$$
	V(\alpha_i)
	=
	\lambda\alpha_i^\mu+\alpha_i^\nu
	=
	\lambda(\alpha^\mu)^{q^{i-1}}+(\alpha^\nu)^{q^{i-1}}
	=
	(\lambda\alpha^\mu+\alpha^\nu)^{q^{i-1}}
	=
	\lambda\alpha^\mu+\alpha^\nu.
	$$
	Hence
	$$
	\frac{V(\alpha_{t+1})}{V(\alpha_i)}
	=
	1\in\mathbb{F}_q^*,
	\qquad 1\leq i\leq t.
	$$
	Thus \eqref{eq-v-condition} in Theorem~\ref{thm-1} is
	satisfied.

	Let $G(x)$ be the unique remainder of $V(x)H'(x)$ modulo $H(x)$, satisfying
	\eqref{eq-Gx}. Then
	$$
	G(x)
	=
	\sum_{i=1}^{t+1}V(\alpha_i)\frac{H(x)}{x-\alpha_i}.
	$$
	Since
	$
	V(\alpha_i)=\lambda\alpha^\mu+\alpha^\nu
	$
	for every $1\leq i\leq t+1$, we obtain
	$$
	G(x)
	=
	(\lambda\alpha^\mu+\alpha^\nu)
	\sum_{i=1}^{t+1}\frac{H(x)}{x-\alpha_i}.
	$$
	Therefore,
	$$
	G(x)
	=
	(\lambda\alpha^\mu+\alpha^\nu)H'(x),
	$$ since 	
	$
	H'(x)=\sum_{i=1}^{t+1}\frac{H(x)}{x-\alpha_i}.
	$
	
	In particular, if
	$
	H(x)=x^{t+1}-a
	$
	is an irreducible binomial satisfying Lemma~\ref{lem-binomial-irreducible},
	then
	$
	H'(x)=(t+1)x^t.
	$
	By Lemma~\ref{lem-binomial-irreducible}, the characteristic of
	$\mathbb{F}_q$ does not divide $t+1$. Hence $t+1\neq0$ in $\mathbb{F}_q$.
	Since
	$
	\lambda\alpha^\mu+\alpha^\nu\in\mathbb{F}_q^*,
	$
	we have
	$$
	G(x)
	=
	(t+1)(\lambda\alpha^\mu+\alpha^\nu)x^t,
	$$
	and hence
	$
	\deg(G)=t.
	$
	Provided that $G(\eta)\neq0$ for every $\eta\in L$, Theorem~\ref{thm-1}
	implies that
	$
	d(\Gamma(L,G))=t+1.
	$
\end{remark}

Besides taking products, one may also consider sums of the polynomials appearing in Theorems~\ref{thm-monomial-V-general}, \ref{thm-binomial-V}, and \ref{thm-product-monomial-binomial-V}. However, unlike the product case, \eqref{eq-v-condition} in Theorem \ref{thm-1} is not automatically preserved. For example, let $ V_1(x)=\lambda_1x^{\mu_1} $ be a monomial satisfying Theorem~\ref{thm-monomial-V-general}, and let $ V_2(x)=\lambda_2x^{\mu_2}+\lambda_3 $ be a binomial satisfying Theorem~\ref{thm-binomial-V}. 
Set
$$
V(x)=V_1(x)+V_2(x).
$$
For the sum $V(x)$, the ratio condition needs to be checked
separately. For $1\leq i\leq t$, write
$$
\frac{V_1(\alpha_{t+1})}{V_1(\alpha_i)}=r_i,
\qquad
\frac{V_2(\alpha_{t+1})}{V_2(\alpha_i)}=s_i,
$$
where $r_i,s_i\in\mathbb F_q^*$. Then
$$
\frac{V(\alpha_{t+1})}{V(\alpha_i)}=\frac{V_1(\alpha_{t+1})+V_2(\alpha_{t+1})}
{V_1(\alpha_i)+V_2(\alpha_i)}
=
\frac{r_iV_1(\alpha_i)+s_iV_2(\alpha_i)}
{V_1(\alpha_i)+V_2(\alpha_i)}.
$$
To ensure that this quotient belongs to $\mathbb F_q^*$, assume that
$$
u_i = \frac{V_1(\alpha_i)}{V_2(\alpha_i)}\in\mathbb F_q
\quad\text{and}\quad
V_1(\alpha_i)+V_2(\alpha_i)\neq0,
\qquad 1\leq i\leq t+1.
$$
Then 
$$
\frac{V(\alpha_{t+1})}{V(\alpha_i)}
=
\frac{r_iu_i+s_i}{u_i+1}
\in\mathbb F_q^*,
\qquad 1\leq i\leq t.
$$
Moreover, the coefficient of $x^t$ in the remainder of $V(x)H'(x)$ modulo
$H(x)$ is
$$
\sum_{i=1}^{t+1}V(\alpha_i)
=
\lambda_1
\operatorname{Tr}_{\mathbb F_{q^{t+1}}/\mathbb F_q}
(\alpha^{\mu_1})
+
\lambda_2
\operatorname{Tr}_{\mathbb F_{q^{t+1}}/\mathbb F_q}
(\alpha^{\mu_2})
+
(t+1)\lambda_3.
$$
Hence, to ensure that this remainder has degree $t$, we also assume that
$$
\lambda_1
\operatorname{Tr}_{\mathbb F_{q^{t+1}}/\mathbb F_q}
(\alpha^{\mu_1})
+
\lambda_2
\operatorname{Tr}_{\mathbb F_{q^{t+1}}/\mathbb F_q}
(\alpha^{\mu_2})
+
(t+1)\lambda_3
\neq0.
$$
Under these additional assumptions, the same argument as above gives a Goppa polynomial \(G(x)\) for which the corresponding Goppa code attains its designed distance \(t+1\).

\section{Summary and concluding remarks}\label{sec-sum}
In this paper, we characterized when a Goppa code attains its designed
distance. By exploiting the structure of Goppa polynomials of the
form
\[
G(x)=U(x)H(x)+V(x)H'(x),
\]
we transformed the existence of a minimum-weight codeword into an
algebraic condition on the values of the auxiliary polynomial $V(x)$ at
the roots of $H(x)$.

The main contributions of this paper are summarized as follows.

\begin{itemize}
	\item We developed a criterion for a 
	Goppa code to attain its designed distance (see Theorem \ref{thm-1}). More precisely, we proved that
	the code contains a codeword of weight $t+1$ on the roots of an
	irreducible polynomial $H(x)$ if and only if
	\[
	\frac{V(\alpha_{i_{t+1}})}{V(\alpha_{i_j})}\in\mathbb F_q^*,
	\qquad 1\leq i\leq t,
	\] 	where $\alpha_{i_1},\ldots,\alpha_{i_{t+1}}$ are the roots of $H(x)$.
	This criterion provides a direct characterization of the designed
	distance attainment.
	
	\item Based on the criterion in Theorem~\ref{thm-1}, we obtained a
	general family of Goppa codes that attain their designed distance by
	prescribing the values of $V(x)$ at the roots of $H(x)$ and applying
	Lagrange interpolation to determine the corresponding Goppa polynomials
	(see Theorem~\ref{thm-2}).
	
	\item We further derived explicit families of Goppa codes by considering
	monomial, binomial, and product forms of the auxiliary polynomial $V(x)$
	(see Theorems~\ref{thm-monomial-V-general},
	\ref{thm-binomial-V}, and
	\ref{thm-product-monomial-binomial-V}). The designed distance attainment
	is characterized by trace values, multiplicative orders, and
	nonvanishing constraints.
\end{itemize}

The results of this paper provide a general framework for characterizing Goppa codes with designed distance attained. Future work may focus on exploring additional structured forms of the Goppa polynomial $G(x)$ and determining the exact minimum distances of
the associated Goppa codes.

\end{document}